\newtheorem{dfn}{Definition}
\newtheorem{thm}{Theorem}[section]
\newtheorem{lem}[thm]{Lemma}
\title{The second law-type work relation \\in non-equilibrium steady states\\ in one-dimensional quantum lattice systems}
\author{Kazuki Yamaga\thanks{Department of Nuclear Engineering, Kyoto University, yamaga.kazuki.62a@st.kyoto-u.ac.jp}}
\date{}
\begin{document}
\maketitle
\begin{abstract}
We consider the Non-Equilibrium Steady State induced by two infinite quantum thermal reservoirs at different temperatures $\beta_L^{-1},\beta_R^{-1}$ and derive an inequality giving the upper bound of the work extracted by cyclic operations. This upper bound tends to $0$ in the equilibrium limit $\beta_L\to\beta_R$ and the inequality reproduces the second law of thermodynamics that one cannot extract any work from equilibrium states by local cyclic operations. In addition, we consider global cyclic operations and obtain an upper bound of the work density in one-dimensional quantum lattice systems, which depends on the model and $\beta_L,\beta_R$. This bound is independent of the operations and also tends to 0 in the equilibrium limit.
\end{abstract}
\section{Introduction}
Quantum statistical mechanics bridges the microscopic quantum dynamics and the macroscopic world. In equilibrium systems there is a complete description by KMS (Gibbs) states. However, we know little about non-equilibrium systems. Most of the studies of non-equilibrium systems focuses on the physically important classes such as linear response regime\cite{K} and non-equilibrium steady states (NESS). In this paper we will deal with the NESS induced by two infinite quantum thermal reservoirs at different temperatures $\beta_L^{-1},\beta_R^{-1}$. We aim to derive the universal properties of NESS with the stand point of work and operations. In equilibrium systems there is a universal law, the second law of thermodynamics that one cannot extract any work by cyclic operations from the system in equilibrium. We explore the second law-type work relation in NESS that reduces to the second law of thermodynamics in the limit that the difference of the temperatures of the two reservoirs tends to $0$.

Using the scattering approach of NESS we derived an inequality which can be regarded as the extension of the second law of thermodynamics. In this inequality the work extracted by a cyclic operation is bounded above by a non-trivial positive constant depending on the operation. In the equilibrium limit $\beta_L\to\beta_R$ this bound becomes $0$ and the inequality reproduces the second law of thermodynamics. Furthermore we estimate the work density (the ratio of the work to the volume of the support of the operation) in one-dimensional quantum lattice systems and derive an upper bound independent of the operations. This bound is also $0$ in the equilibrium limit $\beta_L\to\beta_R$, although the inequality is not tight in the sense that there are no operations which achieve this bound.

This paper is organized as follows. In section 2, we introduce the situation which we will consider and review the scattering approach of NESS. Section 3 is devoted to our main result, the second law-type work relation in NESS. Under some assumptions on the dynamics we first derive an inequality on the upper bound of the work extracted from NESS by cyclic operations. Then the work density is estimated in one-dimensional quantum lattice systems. Finally in section 4, we discuss free Fermi gas on the one-dimensional lattice $\mathbb{Z}$, it is an example that satisfies all the assumptions in section 3.

\section{Non-Equilibrium Steady States}

A general quantum system including an infinitely extended system is described by a (unital) C*-algebra $\mathcal{A}$ and a one-parameter group of *-automorphisms $\{\alpha_t\mid t\in\mathbb{R}\}$ on $\mathcal{A}$, which represent the set of observables of the system and the dynamics respectively. In the present paper, we assume that the dynamics $\alpha_t$ is strongly continuous, i.e. $\displaystyle\lim_{t\to0}\|\alpha_t(A)-A\|=0,\ A\in\mathcal{A}$. Such a pair $(\mathcal{A},\alpha)$ is called a C*-dynamical system. State is a normalized positive linear functional $\omega\colon\mathcal{A}\to\mathbb{C}$ giving the expectation values of observables;
\begin{itemize}
\item $\omega(\lambda A+B)=\lambda\omega(A)+\omega(B),\ \lambda\in\mathbb{C},A,B\in\mathcal{A}$
\item $\omega(A^{*}A)\ge0,\ A\in\mathcal{A}$
\item $\omega(I)=1$.
\end{itemize}
Denote the set of  all states of $\mathcal{A}$ by $\mathfrak{S}(\mathcal{A})$. $\mathfrak{S}(\mathcal{A})$ is compact in the weak* topology. In equilibrium quantum statistical mechanics, thermal equilibrium state is described by the following KMS state.
\begin{dfn}[KMS state]
Let $\mathcal{A}$ be a C*-algebra and $\{\alpha_t\mid t\in\mathbb{R}\}$ a dynamics on $\mathcal{A}$. For $\beta>0$ the state $\omega$ satisfying the following conditions is called a $(\beta,\alpha)$-KMS state;\\
For any $A,B\in\mathcal{A}$, there exists a function $F_{AB}(z)$ analytic on $\mathcal{D}_{\beta}=\{z\in\mathbb{C}\mid0<\mathrm{Im}z<\beta\}$ and bounded continuous on $\overline{\mathcal{D}_{\beta}}$ (the closure of $\mathcal{D}_\beta$) and satisfying the boundary conditions
\[ F_{AB}(t)=\omega(A\alpha_t(B)) \]
\[ F_{AB}(t+i\beta)=\omega(\alpha_t(B)A),\ t\in\mathbb{R} .\]
\end{dfn}

Let us now introduce Non-Equilibrium Steady State (NESS). There are several approaches to the study of NESS such as using quantum dynamical semigroup\cite{LS,JP} and Hamiltonian dynamics of infinite systems including reservoirs. In this paper we consider NESS induced by infinitely extended reservoirs introduced by Ruelle\cite{R}. Here we recall it.

Let $(\mathcal{A},\alpha)$ be a C*-dynamical system and $\omega$  a $\alpha$-invariant state, i.e. $\omega\circ\alpha_t=\omega,\ t\in\mathbb{R}$. $\alpha^V_t$ is the dynamics perturbed by $V=V^{*}\in\mathcal{A}$,
\[ \alpha_t^V(A)=\alpha_t(A)+\sum_{n=1}^\infty i^n\int^t_0dt_1\int^{t_1}_0dt_2\cdots\int^{t_{n-1}}_0dt_n[\alpha_{t_n}(V),[\cdots,[\alpha_{t_1}(V),\alpha_t(A)]\cdots] .\]
If there is an increasing sequence $T_n\nearrow\infty$ such that  
\[ \omega_+(A)=\lim_{n\to\infty}\frac{1}{T_n}\int_0^{T_n}\omega\circ\alpha_t^V(A)dt,\ A\in\mathcal{A}, \]
then we say that $\omega_+$ is a NESS. Denote $\Sigma_V^+(\omega)$ the set of NESS starting from the initial state $\omega$ . Since $\mathfrak{S}(\mathcal{A})$ is compact $\Sigma_V^+(\omega)$ is not empty set. It is easily checked that NESS $\omega_+\in\Sigma^+_V(\omega)$ is  $\alpha^V$-invariant. NESS is the finally realized state developed by the dynamics $\alpha^V_t$ starting from the initial state $\omega$. 

It is a difficult problem to show the convergence to NESS, $\displaystyle\lim_{t\to\infty}\omega\circ\alpha_t^V(A)=\omega_+(A),\ A\in\mathcal{A}$ in concrete models. The models that have been exactly proved are free Fermi gas (we will deal with in section 4) and XY model \cite{AH,T,JP1,AP,AJPP2}. There are two approaches to deal with the convergence to NESS, the scattering approach and the spectral approach\cite{JP2}. In this paper, we adopt the scattering approach. In this approach the key assumption is the existence of the limit s-$\displaystyle\lim_{t\to\infty}\alpha_{-t}\circ\alpha_t^V$ (s-$\displaystyle\lim_{t\to\infty}$ is the strong limit). If the limit s-$\displaystyle\lim_{t\to\infty}\alpha_{-t}\circ\alpha_t^V\equiv\gamma$ exists, due to the $\alpha$-invariance of $\omega$ one obtains the convergence to NESS
\[ \omega\circ\alpha^V_t(A)=\omega\circ\alpha_{-t}\circ\alpha^V_t(A)\to\omega\circ\gamma(A)\ (t\to\infty). \]
Simple calculation shows 
\begin{itemize}
\item $\alpha_t\circ\gamma=\gamma\circ\alpha^V_t$
\item $\gamma$ is isometry 
\item $\gamma:\mathcal{A}\to\mathcal{A}$ is a *-morphism.
\end{itemize}
Note that in general $\gamma$ is not surjective. By the relation $\alpha_t\circ\gamma=\gamma\circ\alpha^V_t$, one obtains $A\in\mathrm{Dom}(\delta_V)\iff\gamma(A)\in\mathrm{Dom}(\delta)$ and $\delta\circ\gamma(A)=\gamma\circ\delta_V(A)$. $\delta$ and $\delta_V=\delta+i[V,\cdot]$ are the generators of $\alpha_t$ and $\alpha_t^V$ respectively and $\mathrm{Dom}(\delta)=\mathrm{Dom}(\delta_V)$ is its domain.

For example, if the system has the $L^1$-asymptotic abelianness, $\alpha_{-t}\circ\alpha_t^V$ converges\cite{AJPP}. 

We close this section by introducing the situation discussed in this paper. Suppose that two thermal reservoirs at temperatures $\beta_L^{-1},\beta_R^{-1}$ contact through a small system (a small system between the reservoirs is not always necessary), see figure1. This situation can be realized in mesoscopic systems. 

\begin{figure}[htbp]
\begin{center}
\includegraphics[clip,width=8.0cm]{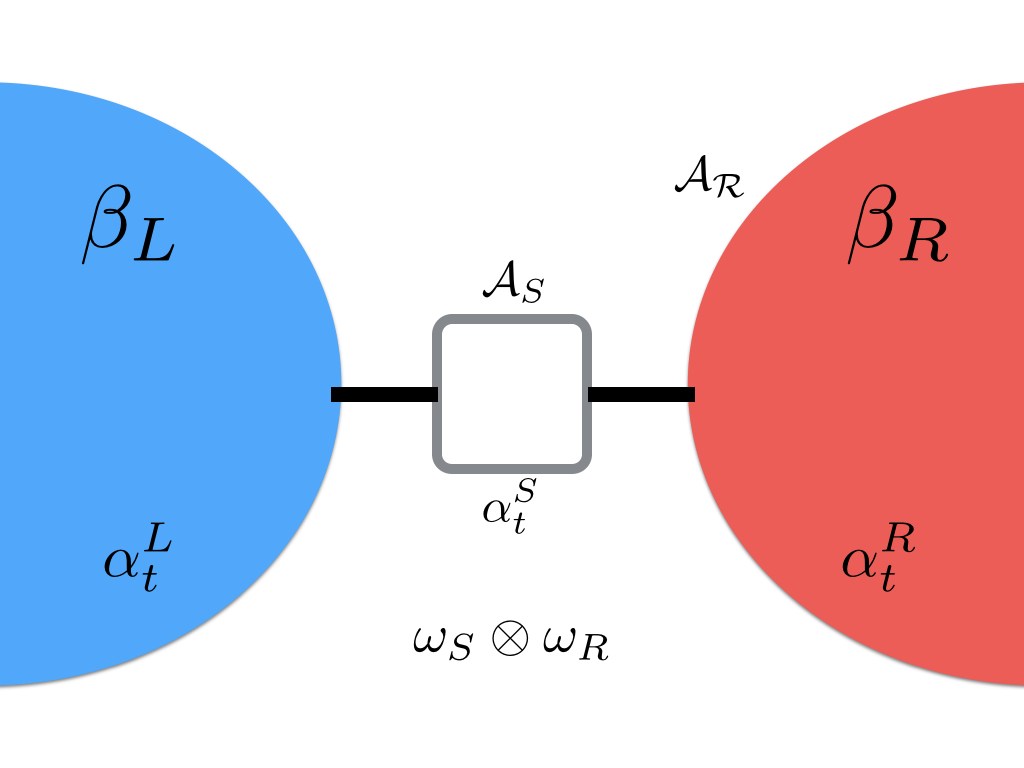}
\end{center}
\caption{a small system with two thermal reservoirs}
\end{figure}

The small system between the reservoirs is described by a finite dimensional C*-algebra $\mathcal{A}_\mathcal{S}=M(n,\mathbb{C})$ and the C*-algebra of observables of the reservoirs is denoted by $\mathcal{A}_\mathcal{R}$. The whole system is the composite system $\mathcal{A}=\mathcal{A}_\mathcal{S}\otimes\mathcal{A}_{\mathcal{R}}$. The dynamics of the small system is given by 
\[ \alpha^\mathcal{S}_t(A)=e^{itH_\mathcal{S}}Ae^{-itH_\mathcal{S}},\ t\in\mathbb{R} \]
for a self-adjoint operator $H_\mathcal{S}\in\mathcal{A}_\mathcal{S}$ (Hamiltonian). The dynamics of the reservoirs $\alpha_t^\mathcal{R}$ is given by the product of the commuting C*-dynamics $\alpha_t^L,\alpha_t^R$;
\[ \alpha^{\mathcal{R}}_t=\alpha^L_t\circ\alpha^R_t .\]
$\alpha^L_t$ is the dynamics of the left reservoir and $\alpha^R_t$ is that of the right reservoir. The dynamics of the whole system before the interaction $V=V^{*}\in\mathcal{A}$ is switched on is $\alpha_t=\alpha^\mathcal{S}_t\otimes\alpha^\mathcal{R}_t$. The system is initially prepared in the state $\omega=\omega_\mathcal{S}\otimes\omega_\mathcal{R}$, where $\omega_\mathcal{S}(A)=\frac{1}{n}\mathrm{Tr}A$  ($A\in\mathcal{A}_\mathcal{S}$) is the chaotic state (since we are interested in the limit state, NESS, the choice of the state in the small system at the initial time does not matter) and $\omega_\mathcal{R}$ is an (1,$\tilde{\alpha}$)-KMS state ($\tilde{\alpha}_t=\alpha^L_{\beta_L t}\circ\alpha^R_{\beta_R t}$). That is, the left and right reservoirs are in equilibrium at temperatures $\beta_L^{-1},\beta_R^{-1}$ respectively. Here after we assume that the left reservoir is colder than the right one, $\beta_L>\beta_R$. Obviously this initial state is $\alpha$-invariant. 

Examples of this situation contain the following cases.
\begin{description}
\item[(1)]Two reservoirs are described by C*-dynamical systems $(\mathcal{A}_L,\alpha^L),\ (\mathcal{A}_R,\alpha^R)$ respectively and consider the composite system 
\[ \mathcal{A}_\mathcal{R}=\mathcal{A}_L\otimes\mathcal{A}_R \]
\[ \alpha^\mathcal{R}_t=\alpha^L_t\otimes\alpha^R_t=(\alpha^L_t\otimes id_R)\circ(id_L\otimes\alpha^R_t), \]
where $id_{\#}$ is the identity operator on $\mathcal{A}_\#$ ($\#=L,R$).
Quantum spin systems are included in this situation.
\item[(2)]Free Fermi reservoir. 
\[ \mathcal{A}_\mathcal{R}=\mathcal{A}^{CAR}(\mathcal{H}_L\oplus\mathcal{H}_R) \]
\[ \alpha^L_t(a^{\#}(f))=a^{\#}(e^{ith_L}f )\]
\[ \alpha^R_t(a^{\#}(f))=a^{\#}(e^{ith_R}f ) ,\ f\in\mathcal{H}_L\oplus\mathcal{H}_R, \]
where $a^{\#}$ is $a^{*}$ or $a$. For a 1-particle Hilbert space $\mathcal{H}$, the CAR algebra $\mathcal{A}^{CAR}(\mathcal{H})$ is the C*-algebra generated by creation and annihilation operators $a^{*}(f),a(f),\ f\in\mathcal{H}$ satisfying the canonical anti-commutation relations
\[ \{a(f),a^{*}(g)\}=\langle f,g\rangle I,\ \{a(f),a(g)\}=0,\]
where $\{A,B\}=AB+BA$ and $\langle\cdot,\cdot\rangle$ is the inner product of $\mathcal{H}$. $h_L,h_R$ are self-adjoint operators on $\mathcal{H}_L,\mathcal{H}_R$ respectively (1-particle Hamiltonian).
\end{description}

\section{Extension of The Second Law of Thermodynamics}
This section is the main part of the present paper. We assume that NESS can be constructed by the scattering approach. And we consider the work extracted from the NESS by cyclic operations.

For a quantum system $(\mathcal{A},\alpha)$, an operation done by the outside world is given as a time-dependent perturbation $V(t)=V(t)^{*}\in\mathcal{A},\ t\in[0,T]$. An operation $V(t)$ is called cyclic if $V(0)=V(T)$. The dynamics $\theta_t\colon\mathcal{A}\to\mathcal{A}$ induced by the operation $V(t)$ is determined by the equations
\begin{itemize}
\item $\frac{d}{dt}\theta_t(A)=\theta_t(\delta(A)+i[V(t),A])$
\item $\theta_0(A)=A,\ A\in\mathcal{A}$.
\end{itemize}
$\delta$ is the generator of $\alpha_t$. $\theta_t$ can be written as
\[ \theta_t(A)=U_t^{*}\alpha_t(A)U_t, \]
where $U_t$ is the unitary element in $\mathcal{A}$ determined by the equations
\begin{itemize}
\item $\frac{d}{dt}U_t=-i\alpha_t(V(t))U_t$
\item $U_0=I$.
\end{itemize}
The work extracted by the cyclic operation $V(t)$ is defined as  $i\omega(U_T^{*}\delta(U_T))$. This type of work is discussed in \cite{L,PW}. In the case of a finite system, $\omega(A)=\mathrm{Tr}\rho A,\alpha_t(A)=e^{itH}Ae^{-itH}$, this is $\mathrm{Tr}\rho H-\mathrm{Tr}\rho\theta_T(H)$ , the difference of energy in the initial and in the finial state.

It is a well known fact that any work cannot be extracted from thermal equilibrium states (KMS states) by cyclic operations (the second law of thermodynamics):
\begin{lem}
Let $\omega$ be a $(\beta,\alpha)$-KMS state, then for a unitary element $U\in\mathrm{Dom}(\delta)$
\[  i\omega(U^{*}\delta(U))=-\frac{1}{\beta}S(\omega\circ\mathrm{Ad}U\|\omega)\le0 \]
holds, where $\mathrm{Ad}U(A)=U^{*}AU$ and $S(\omega\circ\mathrm{Ad}U\|\omega)$ is Araki's relative entropy \cite{A}.
\end{lem}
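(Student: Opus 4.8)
The plan is to prove the stated equality and then read off the inequality from the positivity of Araki's relative entropy, $S(\cdot\|\cdot)\ge 0$, which vanishes exactly when the two states coincide. To fix the mechanism, I would first check the finite-dimensional model $\omega(\cdot)=\mathrm{Tr}(\rho\,\cdot)$ with $\rho=e^{-\beta H}/Z$ and $\delta(A)=i[H,A]$: a direct computation gives $i\omega(U^{*}\delta(U))=\mathrm{Tr}\big((\rho-\sigma)H\big)$, where $\sigma=U\rho U^{*}$ is the density matrix of $\psi:=\omega\circ\mathrm{Ad}U$, while $\log\rho=-\beta H-\log Z$ turns the Umegaki form $S(\psi\|\omega)=\mathrm{Tr}(\sigma\log\sigma-\sigma\log\rho)$ into $-\beta\,\mathrm{Tr}\big((\rho-\sigma)H\big)$. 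These combine into the claimed identity, and the only nontrivial input is that $\rho$ and $\sigma$ are unitarily conjugate, hence carry the same von Neumann entropy. This warns that in a general (possibly type III) setting, where there is neither a trace nor a density matrix, the argument must be phrased purely through modular theory.

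For the general case I would pass to the GNS representation $(\mathcal{H},\pi,\Omega)$ of the $\alpha$-invariant KMS state $\omega$, where $\Omega$ is cyclic and separating since KMS states are faithful. Let $\Delta$ be the modular operator and $L$ the standard Liouvillean, i.e. the self-adjoint generator with $L\Omega=0$ implementing $\alpha$, so that $\pi(\delta(A))=i[L,\pi(A)]$ on $\mathrm{Dom}(\delta)$. The decisive consequence of the $(\beta,\alpha)$-KMS property is that the modular automorphism group of $\omega$ equals $\alpha_{-\beta t}$, whence $\Delta^{it}=e^{-i\beta tL}$ and $\log\Delta=-\beta L$. Writing $\Phi:=\pi(U)\Omega$, a unit vector representing $\psi$, and using $L\Omega=0$, I would compute $i\omega(U^{*}\delta(U))=-\langle\Phi,L\Phi\rangle=\tfrac{1}{\beta}\langle\Phi,\log\Delta\,\Phi\rangle$. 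The hypothesis $U\in\mathrm{Dom}(\delta)$ is exactly what guarantees $L\Phi=-i\,\pi(\delta(U))\Omega\in\mathcal{H}$, so $\Phi\in\mathrm{Dom}(L)=\mathrm{Dom}(\log\Delta)$ and this expression is finite and real.

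It then remains to identify $\tfrac{1}{\beta}\langle\Phi,\log\Delta\,\Phi\rangle$ with $-\tfrac{1}{\beta}S(\psi\|\omega)$, i.e. to prove $S(\psi\|\omega)=-\langle\Phi,\log\Delta\,\Phi\rangle$. I would use the relative modular operator $\Delta_{\Omega,\Phi}=S_{\Omega,\Phi}^{*}S_{\Omega,\Phi}$ built from the closure of $S_{\Omega,\Phi}\colon A\Phi\mapsto A^{*}\Omega$, in terms of which $S(\psi\|\omega)=-\langle\Phi,\log\Delta_{\Omega,\Phi}\,\Phi\rangle$ in the sign convention making $S\ge 0$. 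The key simplification is that for $\Phi=\pi(U)\Omega$ with $U$ unitary one has $\Delta_{\Omega,\Phi}=\Delta$: substituting $A=B\pi(U)^{*}$ gives $S_{\Omega,\Phi}(B\Omega)=\pi(U)B^{*}\Omega=\pi(U)S_{\Omega}(B\Omega)$, so $S_{\Omega,\Phi}=\pi(U)S_{\Omega}$ and hence $\Delta_{\Omega,\Phi}=S_{\Omega}^{*}\pi(U)^{*}\pi(U)S_{\Omega}=S_{\Omega}^{*}S_{\Omega}=\Delta$. This is the abstract substitute for the ``equal entropy'' remark above. Combined with the previous paragraph this yields the identity, and the bound $\le 0$ follows from $S\ge 0$; the latter is itself immediate from Jensen's inequality applied to the convex function $-\log$, since $\langle\Phi,\Delta_{\Omega,\Phi}\Phi\rangle=\|S_{\Omega,\Phi}\Phi\|^{2}=\|\Omega\|^{2}=1$ forces $-\langle\Phi,\log\Delta_{\Omega,\Phi}\Phi\rangle\ge-\log 1=0$.

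The main obstacle is the modular-theoretic bookkeeping of the last two steps: fixing a consistent convention for the relative modular operator together with the sign of $S$, justifying $\log\Delta=-\beta L$ (the coincidence of the standard Liouvillean with $-\beta^{-1}\log\Delta$, which rests on the uniqueness of the standard implementation fixing $\Omega$ and preserving the natural cone), and controlling the unboundedness of $\log\Delta$ through the domain hypothesis $U\in\mathrm{Dom}(\delta)$. Once these identifications are in place the computation is short; everything else is routine.
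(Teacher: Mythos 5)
The paper does not actually prove this lemma; it is quoted as a known result (with the relevant literature being Lenard, Pusz--Woronowicz, and Araki), so there is no in-paper argument to compare against. Your proposal is a correct and complete reconstruction of the standard modular-theoretic proof: the identification $\log\Delta=-\beta L$ from the fact that the modular group of a $(\beta,\alpha)$-KMS state is $\alpha_{-\beta t}$, the computation $i\omega(U^{*}\delta(U))=-\langle\Phi,L\Phi\rangle$ using $L\Omega=0$ and $L\pi(U)\Omega=-i\pi(\delta(U))\Omega$, and the key observation $S_{\Omega,\Phi}=\pi(U)S_{\Omega}$, hence $\Delta_{\Omega,\Phi}=\Delta$, all check out, as does the Jensen argument for $S\ge 0$. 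The only point you pass over silently is that Araki's relative entropy is usually defined via the representative of $\omega\circ\mathrm{Ad}U$ in the natural cone rather than via $\pi(U)\Omega$; you should invoke the standard fact that $-\langle\Psi,\log\Delta_{\Omega,\Psi}\Psi\rangle$ is independent of the choice of vector representative $\Psi$ of the state (representatives differ by a partial isometry in the commutant, which conjugates the relative modular operator). With that remark added, the argument is complete.
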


It should be noted that if $\omega$ is a state describing pure phase but not a KMS state or a ground state for the dynamics $\alpha_t$, there is a cyclic operation with which one can extract positive work. This follows from the result of \cite{PW}.

\subsection{General Theory}
Here we impose two assumptions.
\begin{description}
\item[(A1)]The limit s-$\displaystyle\lim_{t\to\infty}\alpha_{-t}\circ\alpha^V_t=\gamma$ exists.
\item[(A2)]$\mathrm{Ran}\gamma=I\otimes\mathcal{A}_\mathcal{R}$.
\end{description}
As explained in the previous section, the system converges to the NESS $\omega_+=\omega\circ\gamma$. This NESS $\omega_+$ is $(1,\sigma)$-KMS state, where $\sigma_t=\gamma^{-1}\circ\tilde{\alpha}_t\circ\gamma$. By Lemma 3.1 for a unitary element $U\in\mathrm{Dom}(\delta_V)$,
\[ -i\omega_+(U^{*}\gamma^{-1}\circ(\beta_L\delta_L+\beta_R\delta_R)\circ\gamma(U))=S(\omega_+\circ\mathrm{Ad}U\|\omega_+) \]
holds. $\gamma^{-1}$ makes sense because $(\beta_L\delta_L+\beta_R\delta_R)\circ\gamma(U)\in I\otimes\mathcal{A}_{\mathcal{R}}$. By the assumption (A2) and the equation $\alpha_t\circ\gamma=\gamma\circ\alpha_t^V$, we have $(\delta_L+\delta_R)\circ\gamma(U)=\gamma\circ\delta_V(U)$ for $U\in\mathrm{Dom}(\delta_V)$ and 
\begin{eqnarray*}
&& -i\omega_+(U^{*}\gamma^{-1}\circ(\beta_L\delta_L+\beta_R\delta_R)\circ\gamma(U)) \\
&=&-i\omega_+\left(U^{*}\gamma^{-1}\circ\left(\frac{\beta_L+\beta_R}{2}(\delta_L+\delta_R)+\frac{\beta_L-\beta_R}{2}(\delta_L-\delta_R)\right)\circ\gamma(U)\right)\\
&=&-i\left(\frac{\beta_L+\beta_R}{2}\right)\omega_+(U^{*}\delta_V(U))-i\left(\frac{\beta_L-\beta_R}{2}\right)\omega_+(U^{*}\gamma^{-1}\circ(\delta_L-\delta_R)\circ\gamma(U)).
\end{eqnarray*}

\begin{lem}
Under the above two assumptions $\mathrm{(A1), (A2)}$, for NESS $\omega_+=\omega\circ\gamma$ and a unitary element $U\in\mathrm{Dom}(\delta_V)$
\begin{eqnarray*}
i\omega_+(U^{*}\delta_V(U))+\frac{2}{\beta_L+\beta_R}S(\omega_+\circ\mathrm{Ad}U\|\omega_+)&=&-i\frac{\beta_L-\beta_R}{\beta_L+\beta_R}\omega_+(U^{*}\gamma^{-1}\circ(\delta_L-\delta_R)\circ\gamma(U))\\
&\le&\frac{\beta_L-\beta_R}{\beta_L+\beta_R}\|\gamma^{-1}\circ(\delta_L-\delta_R)\circ\gamma(U)\|.
\end{eqnarray*}
holds.
\end{lem}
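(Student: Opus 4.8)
The plan is to apply Lemma 3.1 to the NESS $\omega_+$, which the preceding discussion identifies as a $(1,\sigma)$-KMS state with $\sigma_t=\gamma^{-1}\circ\tilde\alpha_t\circ\gamma$, and then to carry out the algebraic reduction already displayed before the statement. First I would note that since $\tilde\alpha_t=\alpha^L_{\beta_L t}\circ\alpha^R_{\beta_R t}$ with $\alpha^L,\alpha^R$ commuting, the generator of $\sigma$ is $\gamma^{-1}\circ(\beta_L\delta_L+\beta_R\delta_R)\circ\gamma$, and its domain contains $U$ because $U\in\mathrm{Dom}(\delta_V)$ forces $\gamma(U)\in I\otimes\mathcal{A}_{\mathcal R}=\mathrm{Ran}\,\gamma$ by (A2), where $\gamma^{-1}$ is legitimately applied. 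Lemma 3.1 with $\beta=1$ then yields $-i\omega_+(U^*\gamma^{-1}\circ(\beta_L\delta_L+\beta_R\delta_R)\circ\gamma(U))=S(\omega_+\circ\mathrm{Ad}U\|\omega_+)$.

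Next I would split $\beta_L\delta_L+\beta_R\delta_R=\tfrac{\beta_L+\beta_R}{2}(\delta_L+\delta_R)+\tfrac{\beta_L-\beta_R}{2}(\delta_L-\delta_R)$ and use the intertwining relation $\delta\circ\gamma=\gamma\circ\delta_V$ together with (A2): on $\mathrm{Ran}\,\gamma=I\otimes\mathcal{A}_{\mathcal R}$ the small-system part of $\delta$ acts trivially, so there $\delta$ reduces to $\delta_L+\delta_R$, whence $(\delta_L+\delta_R)\circ\gamma(U)=\gamma\circ\delta_V(U)$. Substituting this collapses the symmetric term to $-i\tfrac{\beta_L+\beta_R}{2}\omega_+(U^*\delta_V(U))$; dividing the resulting identity by $\tfrac{\beta_L+\beta_R}{2}$ and rearranging produces the claimed equality.

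For the inequality I would first observe that the middle expression is real, since by the equality it equals $i\omega_+(U^*\delta_V(U))+\tfrac{2}{\beta_L+\beta_R}S(\omega_+\circ\mathrm{Ad}U\|\omega_+)$: indeed $U^*\delta_V(U)$ is anti-self-adjoint because $\delta_V$ is a *-derivation with $\delta_V(I)=0$ and $U$ is unitary, so $i\omega_+(U^*\delta_V(U))\in\mathbb R$, while $S\ge0$. A real number is bounded by its own modulus, and I would then apply the state bound $|\omega_+(X)|\le\|X\|$ with $X=U^*\gamma^{-1}\circ(\delta_L-\delta_R)\circ\gamma(U)$, using $\|U^*X'\|=\|X'\|$ for unitary $U$. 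Since we assume $\beta_L>\beta_R$ the factor $\tfrac{\beta_L-\beta_R}{\beta_L+\beta_R}$ is positive, so the bound keeps its direction and gives the stated right-hand side.

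The routine part is the algebra, which is essentially written out before the statement; the points requiring care are the structural ones. I expect the main obstacle to be not any single estimate but verifying well-definedness: that $\delta_L-\delta_R$ preserves $I\otimes\mathcal{A}_{\mathcal R}$ so that $\gamma^{-1}\circ(\delta_L-\delta_R)\circ\gamma$ is a genuine *-derivation on $\mathcal{A}$ carrying $U$ to an element of finite norm, and the reality of the middle term, which is precisely what licenses passing from the "$\le$ modulus" step to the final norm bound.
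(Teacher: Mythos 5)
Your proposal is correct and follows essentially the same route as the paper: apply Lemma 3.1 to $\omega_+$ as a $(1,\sigma)$-KMS state, split $\beta_L\delta_L+\beta_R\delta_R$ into symmetric and antisymmetric parts, use $(\delta_L+\delta_R)\circ\gamma(U)=\gamma\circ\delta_V(U)$ from (A2) and the intertwining relation, and bound the remaining term by its norm. Your added checks (reality of the antisymmetric term via anti-self-adjointness of $U^*D(U)$, and well-definedness of $\gamma^{-1}$ on $\mathrm{Ran}\,\gamma$) are exactly the points the paper leaves implicit.
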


Since the relative entropy is non-negative we obtain an upper bound of the work.
\begin{thm}
Under the assumptions $\mathrm{(A1), (A2)}$, we have
\begin{equation}
i\omega_+(U^{*}\delta_V(U))\le\frac{\beta_L-\beta_R}{\beta_L+\beta_R}\|\gamma^{-1}\circ(\delta_L-\delta_R)\circ\gamma(U)\|.
\end{equation}
\end{thm}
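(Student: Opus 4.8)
The plan is to read Theorem 3.1 straight off Lemma 3.2, which has already done the substantive work. Lemma 3.2 gives the inequality
\[
i\omega_+(U^{*}\delta_V(U))+\frac{2}{\beta_L+\beta_R}S(\omega_+\circ\mathrm{Ad}U\|\omega_+)\le\frac{\beta_L-\beta_R}{\beta_L+\beta_R}\|\gamma^{-1}\circ(\delta_L-\delta_R)\circ\gamma(U)\|.
\]
Since Araki's relative entropy is non-negative, $S(\omega_+\circ\mathrm{Ad}U\|\omega_+)\ge0$, and the prefactor $\tfrac{2}{\beta_L+\beta_R}$ is positive, removing this term only decreases the left-hand side. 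Hence
\[
i\omega_+(U^{*}\delta_V(U))\le i\omega_+(U^{*}\delta_V(U))+\frac{2}{\beta_L+\beta_R}S(\omega_+\circ\mathrm{Ad}U\|\omega_+)\le\frac{\beta_L-\beta_R}{\beta_L+\beta_R}\|\gamma^{-1}\circ(\delta_L-\delta_R)\circ\gamma(U)\|,
\]
which is exactly the asserted bound. No further estimate is needed.

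To be self-contained I would also retrace the derivation of Lemma 3.2, since that is where the content sits. First I would confirm that $\omega_+=\omega\circ\gamma$ is a $(1,\sigma)$-KMS state for $\sigma_t=\gamma^{-1}\circ\tilde{\alpha}_t\circ\gamma$, transporting the $(1,\tilde{\alpha})$-KMS property of $\omega_\mathcal{R}$ through the isometric $*$-morphism $\gamma$ and using (A2) so that $\gamma$ maps onto $I\otimes\mathcal{A}_\mathcal{R}$ and $\gamma^{-1}$ is meaningful there. Applying Lemma 3.1 at inverse temperature $1$ gives $-i\omega_+(U^{*}\delta_\sigma(U))=S(\omega_+\circ\mathrm{Ad}U\|\omega_+)$, where $\delta_\sigma=\gamma^{-1}\circ(\beta_L\delta_L+\beta_R\delta_R)\circ\gamma$. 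I would then split the reservoir generator symmetrically,
\[
\beta_L\delta_L+\beta_R\delta_R=\frac{\beta_L+\beta_R}{2}(\delta_L+\delta_R)+\frac{\beta_L-\beta_R}{2}(\delta_L-\delta_R),
\]
and use $(\delta_L+\delta_R)\circ\gamma(U)=\gamma\circ\delta_V(U)$, which follows from $\delta\circ\gamma=\gamma\circ\delta_V$ together with (A2) (the range of $\gamma$ lies in $I\otimes\mathcal{A}_\mathcal{R}$, so the small-system part of $\delta$ annihilates $\gamma(U)$). This turns the symmetric piece into $\tfrac{\beta_L+\beta_R}{2}\delta_V(U)$, and after dividing by $\tfrac{\beta_L+\beta_R}{2}$ and rearranging I recover the equality in Lemma 3.2; its final inequality is merely $|\omega_+(U^{*}X)|\le\|X\|$, valid because $\omega_+$ is a state and $U$ is unitary.

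Given this structure, Theorem 3.1 carries no genuine obstacle of its own: it is a single application of positivity of relative entropy. The delicate points all live upstream, in Lemma 3.2. I expect the main one to be justifying that $\gamma^{-1}$ is well defined on the elements to which it is applied: $\gamma$ is an isometric $*$-morphism, hence injective, but it need not be surjective, so one must check that $(\beta_L\delta_L+\beta_R\delta_R)\circ\gamma(U)$ genuinely lies in $\mathrm{Ran}\gamma=I\otimes\mathcal{A}_\mathcal{R}$ (it does, since $\delta_L,\delta_R$ preserve $\mathcal{A}_\mathcal{R}$), and relatedly that the KMS structure really transports through $\gamma$. A secondary issue is domain bookkeeping: one needs $U\in\mathrm{Dom}(\delta_V)$ so that $\gamma(U)\in\mathrm{Dom}(\delta)$ and every application of a generator is legitimate, which is precisely the hypothesis imposed in the statement.
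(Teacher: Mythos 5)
Your proposal is correct and follows exactly the paper's route: the theorem is obtained from Lemma 3.2 by dropping the non-negative relative entropy term, and your reconstruction of Lemma 3.2 (the symmetric/antisymmetric split of $\beta_L\delta_L+\beta_R\delta_R$, the identity $(\delta_L+\delta_R)\circ\gamma=\gamma\circ\delta_V$ via (A2), and the final bound $|\omega_+(U^{*}X)|\le\|X\|$) matches the paper's derivation. Nothing further is needed.
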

This inequality means that the work extracted by the cyclic operation $U$ from the NESS $\omega_+$ is bounded above by $\frac{\beta_L-\beta_R}{\beta_L+\beta_R}\|\gamma^{-1}\circ(\delta_L-\delta_R)\circ\gamma(U)\|$. This bound is small when the temperatures of the two reservoirs are close and the average temperature is high. Furthermore in the equilibrium limit $\beta_L\to\beta_R$ this bound tends to 0, this inequality reproduces the second law of thermodynamics (Lemma 3.1). Although in the second law of thermodynamics the bound 0 is independent of the cycle, the bound in the inequality (1) is not. Generally, for NESS such a cycle-independent upper bound cannot be expected. In fact as discussed later, one can extract exactly positive work density. This means that there is no finite cycle-independent bound for work. Then how about ``work density'' defined by the ratio of the work to the volume of the support of the operation? In the next section we consider one-dimensional quantum lattice systems and estimate the work density extracted by cyclic operations.

\subsection{Work Density in One-Dimensional Quantum Lattice Systems}
Here we consider one-dimensional quantum lattice systems, quantum spin systems or lattice fermion systems on $\mathbb{Z}$. First let us recall them.
\begin{itemize}
\item Quantum Spin Systems;\\
Let $\mathcal{H}$ be a finite dimensional Hilbert space describing a spin. For a finite region $X\in\mathcal{P}_f(\mathbb{Z})$ ($\mathcal{P}_f(\mathbb{Z})$ is the set of all finite subsets of $\mathbb{Z}$) the algebra of the observables on $X$ is $\mathcal{A}^S_X=\mathcal{B}(\mathcal{H}_X)$, where $\mathcal{H}_X=\mathcal{H}^{\otimes|X|}$ ($|X|$ is the number of elements of $X$). Define a C*-algebra $\mathcal{A}^S$ of quasi-local observables by a completion of the *-algebra of strictly local observables $\mathcal{A}^S_{loc}=\displaystyle\cup_{X\in\mathcal{P}_f(\mathbb{Z})}\mathcal{A}^S_X$,
\[ \mathcal{A}^S=\overline{\mathcal{A}_{loc}}^{\|\cdot\|}. \]
For infinite $\Lambda\subset\mathbb{Z}$, $\mathcal{A}_\Lambda^S$ is defined similarly, $\mathcal{A}_\Lambda^S=\overline{\cup_{X\in\mathcal{P}_f(\Lambda)}\mathcal{A}_X^S}^{\|\cdot\|}$.
\item Lattice Fermion Systems;\\
Denote $\mathcal{A}^{CAR}=\mathcal{A}^{CAR}(l^2(\mathbb{Z}))$, where $l^2(\mathbb{Z})$ is the Hilbert space of square summable functions on $\mathbb{Z}$. For $X\subset\mathbb{Z}$, define $\mathcal{A}_X^{CAR}$ as the C*-algebra generated by $a_n,a^{*}_n,\ n\in X$ (we denote $a^{\#}(e_n)$ by $a^{\#}_n$, where $\{e_n\}_{n\in\mathbb{Z}}$ is the standard basis of $l^2(\mathbb{Z})$; $e_n(m)=\delta_{nm}$). Define a *-automorphism $\Theta$ of $\mathcal{A}^{CAR}$ by 
\[ \Theta(a^{\#}_n)=-a^{\#}_n,\ n\in\mathbb{Z} \]
Obviously $\Theta^2=id$. The elements of 
\[ \mathcal{A}^{CAR}_+=\{A\in\mathcal{A}^{CAR}\mid \Theta(A)=A\} \]
\[ \mathcal{A}^{CAR}_-=\{A\in\mathcal{A}^{CAR}\mid\Theta(A)=-A \} \]
are called even, odd respectively. Any $A\in\mathcal{A}^{CAR}$ can be decomposed into even and odd parts
\[ A=A_++A_-,\]
$A_+\in\mathcal{A}^{CAR}_+,A_-\in\mathcal{A}^{CAR}_-$. So $\mathcal{A}^{CAR}=\mathcal{A}^{CAR}_++\mathcal{A}^{CAR}_-$.
\end{itemize}
Let $\mathcal{A}$ be $\mathcal{A}^S$ or $\mathcal{A}^{CAR}$ and $\mathcal{A}_X$ be $\mathcal{A}_X^S$ or $\mathcal{A}_X^{CAR}$ for $X\subset\mathbb{Z}$. $\mathcal{A}$ has the action of the translation $\mathbb{Z}$, $v_x\colon\mathcal{A}\to\mathcal{A},\ x\in\mathbb{Z}$. $\mathcal{A}$ has the unique tracial state $\tau$ ($\tau$ is a state satisfying $\tau(AB)=\tau(BA)$ for $A,B\in\mathcal{A}$) and the unique conditional expectation $E_\Lambda\colon\mathcal{A}\to\mathcal{A}$ for each $\Lambda\subset\mathbb{Z}$ characterized by the following properties
\begin{itemize}
\item $E_\Lambda(A)\in\mathcal{A}_\Lambda$
\item $\tau(E_\Lambda(A)B)=\tau(AB)$ for $A\in\mathcal{A},B\in\mathcal{A}_\Lambda$.
\end{itemize}
Conditional expectation $E_\Lambda$ is a positive unital linear map with norm 1 that satisfies $E_\Lambda^2=E_\Lambda$ and for $A\in\mathcal{A},B\in\mathcal{A}_\Lambda$
\begin{itemize}
\item $E_\Lambda(AB)=E_\Lambda(A)B$
\item $E_\Lambda(BA)=BE_\Lambda(A)$.
\end{itemize}
This conditional expectation satisfies $E_\Lambda\cdot E_{\Lambda'}=E_{\Lambda\cap\Lambda'}$.

The important notion of the lattice systems is the interaction. Interaction $\Phi$ is a function mapping a finite region $X\in\mathcal{P}_f(\mathbb{Z})$ to a self-adjoint element $\Phi(X)$ in $\mathcal{A}_X$. $\Phi$ is called translationally invariant if $v_x(\Phi(X))=\Phi(X+x)$ for $x\in\mathbb{Z}, X\in\mathcal{P}_f(\mathbb{Z})$ ($X+x=\{y+x\mid y\in X\}$). If there is $R>0$ such that $\mathrm{diam}(X)\ge R\Rightarrow\Phi(X)=0$, then we say that $\Phi$ is of finite range, where $\mathrm{diam}(X)$ is a diameter of $X$, $\mathrm{diam}(X)=\sup\{|x-y|\mid x,y\in X\}$. In this section we consider operations done by the outside world not necessarily local. Such operations are formulated as time dependent interactions. Given a time dependent interaction $\Phi_t,\ t\in[0,T]$, for each finite region $\Lambda\in\mathcal{P}_f(\mathbb{Z})$ we have a time dependent local Hamiltonian 
\[ H_\Lambda(\Phi_t)\equiv\sum_{X\subset\Lambda}\Phi_t(X) \in\mathcal{A}_\Lambda \]
and the dynamics $\theta_t^\Lambda$ following the discussion in the introduction of this section. This $\theta_t^\Lambda$ can be written as 
\[ \theta_t^\Lambda(A)=U_t^{\Lambda*}AU_t^\Lambda,\ A\in\mathcal{A} \]
for a unitary element $U_t^\Lambda\in\mathcal{A}_\Lambda$. Extend $\Phi_t$ for $t\neq[0,T]$ by $\Phi_t=\Phi_0,\ t\in(-\infty,0)$ and $\Phi_t=\Phi_T,\ t\in(T,\infty)$ and define 
\[ \|\Phi_\cdot\|\equiv\sup_{x,y\in\mathbb{Z}}\sum_{Z\ni x,y}(|x-y|+1)^2\sup_{t\in\mathbb{R}}\|\Phi_t(Z)\| .\]
If $\|\Phi_\cdot\|<\infty$ the thermodynamical limit of $\theta^\Lambda_t$ exists. Consider the sequence of finite subsets of the form $\Lambda_N=[-N,N]\cap\mathbb{Z},\ N\in\mathbb{N}$.

\begin{lem}
For an operation $\Phi_t$ such that $\|\Phi_\cdot\|<\infty$ (and even, i.e. $\Phi_t(X)\in\mathcal{A}_+^{CAR}$,  in the case of the fermion system), the limit
\[ \lim_{N\to\infty}\theta_t^{\Lambda_N}(A)=\theta_t(A),\ A\in\mathcal{A} \]
exists and $\theta_t$ is an isometric *-morphism of $\mathcal{A}$. If $\Phi_t$ does not depend on $t\in\mathbb{R}$ $(\Phi_t=\Phi)$, $\theta_t$ is a C*-dynamics.
\end{lem}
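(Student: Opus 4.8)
The plan is to show that for each strictly local $A$ (i.e. $A\in\mathcal{A}_X$ for some finite $X$) the sequence $\theta_t^{\Lambda_N}(A)$ is Cauchy in norm, uniformly for $t$ in compact intervals, and then to push the algebraic properties through the limit. For each finite $N$ the Hamiltonian $H_{\Lambda_N}(\Phi_t)$ is a bounded self-adjoint element, so the propagator $U_t^{\Lambda_N}$ exists (as a norm-convergent Dyson series) and $\theta_t^{\Lambda_N}=\mathrm{Ad}\,U_t^{\Lambda_N*}$ is a genuine $*$-automorphism of $\mathcal{A}$; in particular it is isometric. It therefore suffices to control the differences $\theta_t^{\Lambda_M}(A)-\theta_t^{\Lambda_N}(A)$ for $M>N$ with $X\subset\Lambda_N$.

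First I would set up the standard interpolation identity. Writing $U^{\Lambda}(t,s)$ for the two-parameter propagator ($i\partial_t U^{\Lambda}(t,s)=H_\Lambda(\Phi_t)U^{\Lambda}(t,s)$, $U^{\Lambda}(s,s)=I$) and $B_s=U^{\Lambda_M}(t,s)^* A\,U^{\Lambda_M}(t,s)$, differentiating the interpolating family $s\mapsto U^{\Lambda_N}(s,0)^* B_s\,U^{\Lambda_N}(s,0)$ between $s=0$ and $s=t$ gives
\[
\theta_t^{\Lambda_N}(A)-\theta_t^{\Lambda_M}(A)=\int_0^t U^{\Lambda_N}(s,0)^*\, i\big[H_{\Lambda_N}(\Phi_s)-H_{\Lambda_M}(\Phi_s),\,B_s\big]\,U^{\Lambda_N}(s,0)\,ds,
\]
so that, conjugation by unitaries being isometric,
\[
\|\theta_t^{\Lambda_N}(A)-\theta_t^{\Lambda_M}(A)\|\le\int_0^t\Big\|\big[H_{\Lambda_M}(\Phi_s)-H_{\Lambda_N}(\Phi_s),\,B_s\big]\Big\|\,ds.
\]
Here $H_{\Lambda_M}(\Phi_s)-H_{\Lambda_N}(\Phi_s)$ is a sum of terms $\Phi_s(Z)$ with $Z\subset\Lambda_M$, $Z\not\subset\Lambda_N$, each supported in the far region meeting $\Lambda_M\setminus\Lambda_N$, and the task reduces to bounding each $\|[\Phi_s(Z),B_s]\|$ by something decaying in the distance from $Z$ to $X$.

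The main work, and the principal obstacle, is the Lieb--Robinson bound: for $A$ localized in $X$ one must establish $\|[B_s,O_Y]\|\le c\,\|A\|\,\|O_Y\|\,G(\mathrm{dist}(X,Y),|t-s|)$ with $G$ decaying in its spatial argument and all constants \emph{uniform in} $M$. This is exactly where $\|\Phi_\cdot\|<\infty$ enters: the quadratic weight $(|x-y|+1)^2$ corresponds to the Lieb--Robinson weight $F(r)=(1+r)^{-2}$, which on $\mathbb{Z}$ is summable and satisfies the reproducing estimate $\sum_z F(\mathrm{dist}(x,z))F(\mathrm{dist}(z,y))\le C\,F(\mathrm{dist}(x,y))$, making the iteration converge and producing summable spatial tails. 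In the fermionic case this step is precisely why evenness $\Phi_t(X)\in\mathcal{A}^{CAR}_+$ is assumed: an even element commutes with every observable supported in a disjoint region, whereas odd elements merely anticommute, so evenness restores the graded locality needed to start and propagate the estimate. Feeding the bound back into the commutator sum and using $\|\Phi_\cdot\|<\infty$, the right-hand side is dominated by a tail sum over $Z$ meeting $\Lambda_M\setminus\Lambda_N$, which $\to0$ as $N\to\infty$ uniformly for $t$ in compacts; hence $(\theta_t^{\Lambda_N}(A))_N$ is Cauchy.

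Finally I would assemble the limit. For strictly local $A$ the norm limit $\theta_t(A):=\lim_N\theta_t^{\Lambda_N}(A)$ exists; since each $\theta_t^{\Lambda_N}$ is isometric and the strictly local elements are dense, an $\varepsilon/3$ argument extends $\theta_t$ to an isometry of all of $\mathcal{A}$, and the relations $\theta_t(A^*)=\theta_t(A)^*$, $\theta_t(AB)=\theta_t(A)\theta_t(B)$ pass to the limit, so $\theta_t$ is an isometric $*$-morphism. If $\Phi_t\equiv\Phi$ is time independent, the finite-volume group law $\theta_t^{\Lambda_N}\circ\theta_s^{\Lambda_N}=\theta_{t+s}^{\Lambda_N}$ and $\theta_0^{\Lambda_N}=\mathrm{id}$ survive in the limit, yielding a one-parameter group with inverse $\theta_{-t}$, hence $*$-automorphisms; strong continuity follows from the uniform-in-$\Lambda$ estimate $\|\theta_t^{\Lambda_N}(A)-A\|\le|t|\,\|[H_\Lambda(\Phi),A]\|$ for local $A$, where the commutator involves only the finitely many terms of $\Phi$ overlapping $X$ and is bounded by $2|X|\,\|\Phi_\cdot\|\,\|A\|$, together with density. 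Thus $\theta_t$ is a C*-dynamics.
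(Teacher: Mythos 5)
Your proposal is correct and follows essentially the route the paper itself takes, since the paper offers no proof beyond deferring to the Lieb--Robinson-based thermodynamic-limit argument of \cite{NO}: the interpolation identity, the Cauchy estimate via a Lieb--Robinson bound with weight $(1+r)^{-2}$ (whose summability and reproducing property on $\mathbb{Z}$ is exactly what $\|\Phi_\cdot\|<\infty$ supplies), the role of evenness in restoring graded locality for fermions, and the density argument for the limit are all the standard ingredients of that reference. The outline correctly identifies where each hypothesis enters, so no genuine gap.
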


The proof of this lemma is similar to that of time independent interaction case\cite{NO}. 

Here we constructed the dynamics from the interaction. Conversely given a dynamics we can construct an interaction from it \cite{BR}. Let $\alpha_t$ be a C*-dynamics on $\mathcal{A}$ and $\mathcal{A}_{loc}\subset\mathrm{Dom}(\delta)$ ($\delta$ is the generator of $\alpha_t$). Then for each $\Lambda\in\mathcal{P}_f(\mathbb{Z})$ there is $\tilde{H}_\Lambda\in\mathcal{A}$ such that 
\[ \delta(A)=i[\tilde{H}_\Lambda,A],\ A\in\mathcal{A}_\Lambda .\]
$\tilde{H}_\Lambda$ has a freedom to add elements of $\mathcal{A}^S_{\Lambda^C},\mathcal{A}^{CAR}_{\Lambda^C}\cap\mathcal{A}_+^{CAR}$. Choose $\tilde{H}_\Lambda$ so that $E_{\Lambda^C}(\tilde{H}_\Lambda)=0$. Set $H_\Lambda=E_\Lambda(\tilde{H}_\Lambda)$ and define $\Phi$ inductively by 
\[ \Phi(\emptyset)=0 \]
\[ \Phi(X)=H_X-\sum_{Y\subsetneq X}\Phi(Y)\in\mathcal{A}_X. \]
Then $\Phi$ is an interaction and by definition $H_\Lambda=\displaystyle\sum_{X\subset\Lambda}\Phi(X)$.
Furthermore 
\begin{equation}
\sum_{X\cap\Lambda\neq\emptyset}\Phi(X) \left(=\lim_{\Lambda'\to\mathbb{Z}}\sum_{X\cap\Lambda\neq\emptyset,X\subset\Lambda'}\Phi(X) \right)
\end{equation}
converges and $\tilde{H}_\Lambda=\displaystyle\sum_{X\cap\Lambda\neq\emptyset}\Phi(X)$. Note that the interaction $\Phi$ constructed above satisfies that if $Y\subsetneq X$, then $E_Y(\Phi(X))=0$.

Now consider NESS of the lattice system. Divide the one-dimensional lattice $\mathbb{Z}$ into three parts $(-\infty,-M-1], [-M,M],[M+1,\infty),\ M\in\mathbb{N}$, the left reservoir, the small system and the right reservoir.

Suppose that the dynamics $\alpha_t^V$ is given by a translationally invariant finite range interaction $\Phi$ (in the case of the fermion system $\Phi$ is even). The interaction between the small system and the reservoirs is $V=\displaystyle\sum_{X\cap\Lambda_M\neq\emptyset,X\cap\Lambda_M^C\neq\emptyset}\Phi(X)\in\mathcal{A}$. Here we impose an assumption in addition to the assumptions (A1), (A2).
\begin{description}
\item[(A3)]For any $\eta\in\mathcal{N}_\omega$, weak*-$\displaystyle\lim_{t\to\infty}\eta\circ\alpha_t^V=\omega_+$.
\end{description}
$\mathcal{N}_\omega$ is the set of $\omega$-normal states. $\phi\in\mathfrak{S}(\mathcal{A})$ is said to be $\omega$-normal if there is a density operator $\rho$ on $\mathcal{H}_\omega$ such that $\phi(A)=\mathrm{Tr}\rho\pi_\omega(A)$, where $(\pi_\omega,\mathcal{H}_\omega)$ is the GNS representation associated with $\omega$. (A3) is satisfied if $\omega_\mathcal{R}$ is mixing (returns to equilibrium) for the C*-dynamical system $(\mathcal{A}_\mathcal{R},\alpha^\mathcal{R})$ \cite{AJPP}. This assumption implies that if the initial state is not far from $\omega$, it converges to the same NESS.

Under the above assumption $\gamma^{-1}\circ(\delta_L-\delta_R)\circ\gamma$ is translationally invariant. So there is a translationally invariant interaction $\Psi$ (not necessarily of finite range) such that
\[ \gamma^{-1}\circ(\delta_L-\delta_R)\circ\gamma(A)=i\left[\sum_{X\cap\Lambda\neq\emptyset}\Psi(X),A\right],\ A\in\mathcal{A}_\Lambda. \]

Now let us define the work density. Let $\Phi_t$ be a cyclic operation (time dependent interaction) such that $\Phi_t=\Phi$ for $t\in(-\infty,0]\cup[T,\infty)$ and $\|\Phi_\cdot\|<\infty$. The state transformation $\theta_T$ induced by the operation is the thermodynamical limit of the state transformation $\theta_T^{\Lambda_N}$ induced by the time dependent Hamiltonian $H_{\Lambda_N}(\Phi_t)$;
\[ \theta_T(A)=\lim_{N\to\infty}\theta_T^{\Lambda_N}(A) ,\ A\in\mathcal{A} \]
(the existence of this limit is due to Lemma 3.3).
$\theta_T^{\Lambda_N}$ can be written as $\theta_T^{\Lambda_N}(A)=U_T^{N*}AU^N_T$ by the unitary element $U_T^N\in\mathcal{A}_{\Lambda_N}$. Define the work density $w(\Phi_\cdot)$ extracted from the system in $\omega_+$ by the cyclic operation $\Phi_t$ by
\[ w(\Phi_\cdot)=\lim_{N\to\infty}\frac{1}{|\Lambda_N|}(\omega_+(H_{\Lambda_N}(\Phi))-\omega_+\circ\theta_T(H_{\Lambda_N}(\Phi))) \]
if the limit exists, that is, the difference of the energy density in the initial state and in the state after the operation. For example this limit exists in the case of translationally invariant operations. It should be remarked that there is a cyclic operation $\Phi_t$ such that $w(\Phi_\cdot)>0$. This fact is proven from the result of \cite{PW} and translation invariance of NESS $\omega_+$, although the proof is not constructive. To estimate this value we introduce a lemma.

\begin{lem}
\[ \lim_{N\to\infty}\frac{1}{|\Lambda_N|}\|\theta_T(H_{\Lambda_N}(\Phi))-\theta_T^{\Lambda_N}(H_{\Lambda_N}(\Phi))\|=0 .\]
\end{lem}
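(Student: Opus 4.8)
The plan is to reduce the claim to a locality (Lieb--Robinson type) estimate for the finite-volume dynamics and then exploit that the extensive observable $H_{\Lambda_N}(\Phi)=\sum_{X\subset\Lambda_N}\Phi(X)$ is a sum of local terms, most of which sit deep inside $\Lambda_N$. Writing
\[ \theta_T(H_{\Lambda_N}(\Phi)) - \theta_T^{\Lambda_N}(H_{\Lambda_N}(\Phi)) = \sum_{X \subset \Lambda_N}\bigl(\theta_T(\Phi(X)) - \theta_T^{\Lambda_N}(\Phi(X))\bigr), \]
it suffices to bound each summand by a quantity that decays as the distance $d_X:=d(X,\Lambda_N^C)$ from the support of $\Phi(X)$ to the boundary grows, and then to average.

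First I would compare the two finite-volume dynamics $\theta_t^{\Lambda_M}$ and $\theta_t^{\Lambda_N}$ for $M>N$. Let $\gamma_{s,t}^{\Lambda}$ denote the two-parameter propagator generated by the time-dependent local Hamiltonian $H_\Lambda(\Phi_t)$, so that $\theta_t^\Lambda=\gamma_{0,t}^\Lambda$. Their generators differ by $i[W_s,\cdot\,]$ with $W_s=H_{\Lambda_M}(\Phi_s)-H_{\Lambda_N}(\Phi_s)=\sum_{X'\subset\Lambda_M,\,X'\cap\Lambda_N^C\neq\emptyset}\Phi_s(X')$. The Duhamel identity then gives
\[ \theta_T^{\Lambda_M}(A) - \theta_T^{\Lambda_N}(A) = \int_0^T \gamma_{0,s}^{\Lambda_M}\bigl(i[W_s, \gamma_{s,T}^{\Lambda_N}(A)]\bigr)\, ds, \]
and since $\gamma_{0,s}^{\Lambda_M}$ is isometric,
\[ \|\theta_T^{\Lambda_M}(A) - \theta_T^{\Lambda_N}(A)\| \le \int_0^T \sum_{X' \cap \Lambda_N^C \neq \emptyset} \bigl\|[\gamma_{s,T}^{\Lambda_N}(A), \Phi_s(X')]\bigr\|\, ds. \]

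Next I would invoke a Lieb--Robinson bound for the dynamics generated by an interaction of finite $\|\Phi_\cdot\|$-norm (in the fermionic case using that $\Phi_t$ is even, so that spatially separated elements have controllable commutators rather than anticommutators). Such a bound holds uniformly in the volume $\Lambda_N$ and in the spatial location, with constants depending only on $\|\Phi_\cdot\|$ and $T$; for $A$ supported in $X$ with $d(X,\Lambda_N^C)=d_X$, and using that every $X'$ occurring in $W_s$ meets $\Lambda_N^C$ and hence lies at distance at least $d_X$ from $X$, the sum over $X'$ is summable and decays in $d_X$. Absorbing the time integral, this yields
\[ \|\theta_T^{\Lambda_M}(A) - \theta_T^{\Lambda_N}(A)\| \le \|A\|\, h(d_X), \]
with $h(d)\to 0$ as $d\to\infty$. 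Since the right-hand side is independent of $M$, letting $M\to\infty$ and using Lemma 3.3 gives $\|\theta_T(\Phi(X))-\theta_T^{\Lambda_N}(\Phi(X))\|\le\|\Phi(X)\|\,h(d_X)$.

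Finally, summing over $X\subset\Lambda_N$ and using that $\Phi$ is translationally invariant of finite range $R$, the number of $X$ with $d_X=r$ is bounded by a constant $c_R$ uniformly in $N$ and $r$, so with $C$ absorbing $c_R$, $\sup_X\|\Phi(X)\|$ and the two boundaries,
\[ \frac{1}{|\Lambda_N|}\sum_{X \subset \Lambda_N}\|\theta_T(\Phi(X)) - \theta_T^{\Lambda_N}(\Phi(X))\| \le \frac{C}{|\Lambda_N|}\sum_{r=0}^{N} h(r). \]
Because $h(r)\to 0$, its Cesàro mean tends to $0$, so the right-hand side vanishes as $N\to\infty$, which is the claim. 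The hard part will be the uniform-in-volume Lieb--Robinson estimate producing the decay $h(d_X)\to0$; the Duhamel identity and the Cesàro averaging are routine once it is in place. Note that only the decay of $h$, not its summability, is used, which is why the quadratic weight in $\|\Phi_\cdot\|$ suffices.
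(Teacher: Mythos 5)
Your proof is correct and follows essentially the route the paper intends: the paper's entire justification is the one-line remark that the lemma ``follows from Lemma 3.4 and the evaluation of the boundary term,'' and your Duhamel-plus-Lieb--Robinson argument is exactly that boundary-term evaluation, carried out with the uniform-in-volume locality estimates from the Nachtergaele--Ogata--Sims framework already invoked for the thermodynamic-limit lemma. The decomposition into local terms $\Phi(X)$, the decay $h(d_X)$ in the distance to $\partial\Lambda_N$, and the Ces\`aro averaging supply precisely the details the paper omits.
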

This lemma follows from Lemma 3.4. and the evaluation of the boundary term. By this lemma we have
\begin{eqnarray*}
 w(\Phi_\cdot)&=&\lim_{N\to\infty}\frac{1}{|\Lambda_N|}(\omega_+(H_{\Lambda_N}(\Phi))-\omega_+(U_T^{N*}H_{\Lambda_N}(\Phi)U_T^N))\\
&=&\lim_{N\to\infty}\frac{1}{|\Lambda_N|}i\omega_+(U_T^{N*}\delta_V(U_T^N)) .
\end{eqnarray*}
Set
\[ c(\beta_L,\beta_R)=\sup_{\|\Phi_\cdot\|<\infty}\varliminf_{N\to\infty}\frac{1}{|\Lambda_N|}\omega_+\left(U_T^{N*}\sum_{X\cap\Lambda_N\neq\emptyset}\Psi(X)U_T^N\right)-\omega_+\left(\sum_{X\ni0}\Psi(X)\right). \]
\begin{lem}
Under the assumptions $\mathrm{(A1),(A2),(A3)}$, we have
\[ 0\le c(\beta_L,\beta_R)\le 2\left\|\displaystyle\sum_{X\ni0}\Psi(X)\right\| \]
and for any cyclic (and even in the case of the fermion system) operations $\Phi_t$ with $\|\Phi_\cdot\|<\infty$,
\[ w(\Phi_\cdot)+\frac{2}{\beta_L+\beta_R}s(\Phi_\cdot)\le\frac{\beta_L-\beta_R}{\beta_L+\beta_R}c(\beta_L,\beta_R) \]
holds, where
\[ s(\Phi_\cdot)=\varlimsup_{N\to\infty}\frac{1}{|\Lambda_N|}S(\omega_+\circ\theta_T^{\Lambda_N}\|\omega_+). \]
\end{lem}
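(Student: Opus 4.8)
The backbone of the argument is to specialise the identity obtained in deriving Lemma 3.2 to the finite-volume unitaries and then pass to the thermodynamic limit. Fix a cyclic operation $\Phi_\cdot$ with $\|\Phi_\cdot\|<\infty$ and let $U_T^N\in\mathcal{A}_{\Lambda_N}\subset\mathrm{Dom}(\delta_V)$ be the unitaries implementing $\theta_T^{\Lambda_N}=\mathrm{Ad}\,U_T^N$. Applying that identity to $U=U_T^N$ gives
\[ i\omega_+(U_T^{N*}\delta_V(U_T^N))+\frac{2}{\beta_L+\beta_R}S(\omega_+\circ\theta_T^{\Lambda_N}\|\omega_+)=-i\frac{\beta_L-\beta_R}{\beta_L+\beta_R}\omega_+\!\left(U_T^{N*}\gamma^{-1}\circ(\delta_L-\delta_R)\circ\gamma(U_T^N)\right). \]
Since $U_T^N$ is localised in $\Lambda_N$ and, by (A3), $\gamma^{-1}\circ(\delta_L-\delta_R)\circ\gamma$ is translationally invariant with interaction $\Psi$, one has $\gamma^{-1}\circ(\delta_L-\delta_R)\circ\gamma(U_T^N)=i[\Psi_N,U_T^N]$ with $\Psi_N=\sum_{X\cap\Lambda_N\neq\emptyset}\Psi(X)$; unitarity turns the right-hand side into $\frac{\beta_L-\beta_R}{\beta_L+\beta_R}\left(\omega_+(U_T^{N*}\Psi_N U_T^N)-\omega_+(\Psi_N)\right)$. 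This is an exact per-volume identity.

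Dividing by $|\Lambda_N|$ and taking $\varlimsup_N$ I let the two convergent sequences factor out: the first term tends to $w(\Phi_\cdot)$ by the relation $w(\Phi_\cdot)=\lim_N\frac1{|\Lambda_N|}i\omega_+(U_T^{N*}\delta_V(U_T^N))$ established above, and $\frac1{|\Lambda_N|}\omega_+(\Psi_N)\to\omega_+(\sum_{X\ni0}\Psi(X))$ by translation invariance of $\omega_+$ and $\Psi$ together with summability. I obtain
\[ w(\Phi_\cdot)+\frac{2}{\beta_L+\beta_R}s(\Phi_\cdot)=\frac{\beta_L-\beta_R}{\beta_L+\beta_R}\left(\varlimsup_{N\to\infty}\frac1{|\Lambda_N|}\omega_+(U_T^{N*}\Psi_N U_T^N)-\omega_+\!\left(\sum_{X\ni0}\Psi(X)\right)\right). \]
Because $\frac{\beta_L-\beta_R}{\beta_L+\beta_R}>0$, the asserted inequality is equivalent to the claim that, for every admissible operation, $\varlimsup_N\frac1{|\Lambda_N|}\omega_+(U_T^{N*}\Psi_N U_T^N)\le c(\beta_L,\beta_R)+\omega_+(\sum_{X\ni0}\Psi(X))$, i.e. the $\varlimsup$ of a fixed operation is dominated by the supremum of the $\varliminf$ defining $c$.

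The step I expect to be the main obstacle is exactly this passage from $\varlimsup$ to $\sup\varliminf$. My plan is a periodisation: pick a subsequence $N_j$ realising the $\varlimsup$ of the given operation, and build a block-periodic operation $\Phi'$ by tiling $\mathbb{Z}$ with translates of $\Lambda_{N_j}$ and applying the correspondingly translated restriction of the operation in each block. The block unitaries commute (disjoint supports) and $\omega_+$ is translation invariant, so the density of $\omega_+(W_T^{N*}\Psi_N W_T^N)$ for $\Phi'$ converges to $\frac1{|\Lambda_{N_j}|}\omega_+((U_T^{N_j})^{*}\Psi_{N_j}U_T^{N_j})$ up to the error from $\Psi$-terms straddling block interfaces; the number of interfaces grows like $|\Lambda_N|/|\Lambda_{N_j}|$ while each contributes $O(\|\sum_{X\ni0}\Psi(X)\|)$, so their density is $O(1/N_j)$. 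Letting $j\to\infty$ yields $\sup_{\Phi'}\varliminf_N(\cdots)\ge\varlimsup_N(\cdots)$. The delicate points are verifying that $\Phi'$ is admissible ($\|\Phi'_\cdot\|<\infty$, and even in the fermionic case) and controlling the interface terms uniformly in $N$, for which the tails of $\Psi$ must be summable.

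Finally I bound $c(\beta_L,\beta_R)$. The lower bound $c\ge0$ comes from the trivial operation $\Phi_t\equiv\Phi$: then $U_T^N$ implements $\theta_T^{\Lambda_N}$ generated by $H_{\Lambda_N}(\Phi)$, which converges to $\alpha_T^V$, so by the boundary estimate of the preceding lemma (with $\Psi$ in place of $\Phi$) and the $\alpha^V$-invariance of $\omega_+$ one gets $\frac1{|\Lambda_N|}\omega_+(U_T^{N*}\Psi_N U_T^N)\to\omega_+(\sum_{X\ni0}\Psi(X))$, giving $\varliminf_N(\cdots)-\omega_+(\sum_{X\ni0}\Psi(X))=0$. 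For the upper bound, since $\omega_+\circ\mathrm{Ad}\,U_T^N$ is a state and $U_T^N$ unitary, $\frac1{|\Lambda_N|}|\omega_+(U_T^{N*}\Psi_N U_T^N)|\le\frac1{|\Lambda_N|}\|\Psi_N\|$, and the key estimate is $\varlimsup_N\frac1{|\Lambda_N|}\|\Psi_N\|\le\|\sum_{X\ni0}\Psi(X)\|$. This I would obtain by telescoping $\Psi_N$ in $N$ and bounding each increment by a conditional expectation: since $\Psi$ is built by the construction recalled above, $E_Y(\Psi(X))=0$ for $Y\subsetneq X$, whence for any region $\Lambda$ the partial sum $\sum_{X\ni j,\,X\subseteq\Lambda}\Psi(X)=E_\Lambda(\sum_{X\ni j}\Psi(X))$ has norm at most $\|\sum_{X\ni0}\Psi(X)\|$. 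Combining with $|\omega_+(\sum_{X\ni0}\Psi(X))|\le\|\sum_{X\ni0}\Psi(X)\|$ gives $c(\beta_L,\beta_R)\le2\|\sum_{X\ni0}\Psi(X)\|$.
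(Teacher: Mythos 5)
Your proposal follows the paper's proof in all of its essential steps: you apply the exact identity behind Lemma 3.2 to the finite-volume unitaries $U_T^N$, rewrite $\gamma^{-1}\circ(\delta_L-\delta_R)\circ\gamma(U_T^N)$ as $i[\Psi_N,U_T^N]$ with $\Psi_N=\sum_{X\cap\Lambda_N\neq\emptyset}\Psi(X)$, identify the volume density of $i\omega_+(U_T^{N*}\delta_V(U_T^N))$ with $w(\Phi_\cdot)$ via Lemma 3.5, and obtain $c(\beta_L,\beta_R)\le 2\bigl\|\sum_{X\ni0}\Psi(X)\bigr\|$ by splitting $\Psi_N$ into $|\Lambda_N|$ groups indexed by the leftmost point of $X$ in $\Lambda_N$, each group being a norm-one conditional expectation of a translate of $\sum_{X\ni0}\Psi(X)$ thanks to $E_Y(\Psi(X))=0$ for $Y\subsetneq X$. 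That last estimate, and the use of the trivial operation for $c\ge0$, are exactly what the paper does.

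The one place you genuinely depart from the paper is the $\varlimsup$/$\varliminf$ issue, and there you have put your finger on something the paper passes over in silence: since $s(\Phi_\cdot)$ is defined with $\varlimsup$ while $c(\beta_L,\beta_R)$ is defined with $\varliminf$, the per-$N$ identity only yields the combined inequality with $\varlimsup_N\frac1{|\Lambda_N|}\omega_+(U_T^{N*}\Psi_NU_T^N)$ on the right, whereas the paper simply writes $\varliminf$ without comment. Your periodisation device is a reasonable idea but is also the weakest link of your write-up: the interface terms are sums of $\Psi(X)$ over sets straddling a block boundary, and only the ordered sums $\sum_{X\ni j}\Psi(X)$ are known to converge in norm --- not absolutely (in the free-fermion example $\|\Psi(\{n,n+2l\})\|\sim l^{-1}$, so $\sum_{X\ni0}\|\Psi(X)\|=\infty$) --- so the claimed $O\bigl(\|\sum_{X\ni0}\Psi(X)\|\bigr)$ bound per interface is not immediate, and you would still have to show that the density converges (not merely has a $\varliminf$) for the periodised operation. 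A much shorter repair exists and is closer to the paper's intent: dropping the nonnegative entropy term \emph{before} taking limits gives $\frac1{|\Lambda_N|}i\omega_+(U_T^{N*}\delta_V(U_T^N))\le\frac{\beta_L-\beta_R}{\beta_L+\beta_R}\frac1{|\Lambda_N|}\bigl(\omega_+(U_T^{N*}\Psi_NU_T^N)-\omega_+(\Psi_N)\bigr)$ for each $N$, and taking $\varliminf$ yields Theorem 3.8 directly; the combined inequality holds verbatim once $s$ is read with $\varliminf$ (take the subsequence realising the $\varliminf$ of the right-hand side). Finally, note that your claim $\frac1{|\Lambda_N|}\omega_+(\Psi_N)\to\omega_+\bigl(\sum_{X\ni0}\Psi(X)\bigr)$ --- which you share with the paper --- is off by the multiplicity $|X|$: grouping by translation classes, the limit is $\sum_{Y:\min Y=0}\omega_+(\Psi(Y))=\omega_+\bigl(\sum_{X\ni0}|X|^{-1}\Psi(X)\bigr)$, which affects the constant but not the structure of the argument.
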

\begin{proof}
Obviously $c(\beta_L,\beta_R)\ge0$.
By the inequality of Lemma 3.2. and Lemma 3.5.
\[ w(\Phi_\cdot)+\frac{2}{\beta_L+\beta_R}s(\Phi_\cdot)\le\frac{\beta_L-\beta_R}{\beta_L+\beta_R}\left[\varliminf_{N\to\infty}\frac{1}{|\Lambda_N|}\omega_+\left(U_T^{N*}\sum_{X\cap\Lambda_N\neq\emptyset}\Psi(X)U_T^N\right)-\omega_+\left(\sum_{X\ni0}\Psi(X)\right)\right], \]
and for any $\Phi_t$ with $\|\Phi_\cdot\|<\infty$
\[ w(\Phi_\cdot)+\frac{2}{\beta_L+\beta_R}s(\Phi_\cdot)\le\frac{\beta_L-\beta_R}{\beta_L+\beta_R}c(\beta_L,\beta_R) \]
holds. Next we show the inequality $c(\beta_L,\beta_R)\le2\left\|\displaystyle\sum_{X\ni0}\Psi(X)\right\|$.
Since $\Psi$ is translationally invariant
\begin{eqnarray*}
\left\|\sum_{X\cap\Lambda_N\neq\emptyset}\Psi(X)\right\|&=&\left\|\sum_{X\ni-N}\Psi(X)+\sum_{X\ni-N+1,X\not\ni-N}\Psi(X)+\cdots+\sum_{X\ni-1,X\not\ni-N,\cdots,N-1}\Psi(X) \right\| \\
&\le&\left\|\sum_{X\ni0}\Psi(X)\right\|+\left\|\sum_{X\ni0,X\not\ni-1}\Psi(X)\right\|+\cdots+\left\|\sum_{X\ni0,X\not\ni-2N,\cdots,-1}\Psi(X)\right\| .
\end{eqnarray*}
Let $S_N=\{=N,-N+1,\cdots,-1\}$. Due to the property of the interaction $\Psi$ that $E_Y(\Psi(X))=0$ for $Y\subsetneq X$,
\[ \sum_{X\ni0,X\not\ni-N,\cdots,-1}\Psi(X)=E_{S_N}\left(\sum_{X\ni0,X\not\ni-N,\cdots,-1}\Psi(X)\right)=E_{S_N}\left(\sum_{X\ni0}\Psi(X)\right) .\]
Since $E_{S_N}$ is a conditional expectation, $\|E_{S_N}(A)\|\le\|A\|$ and we obtain 
\[ \left\|\sum_{X\cap\Lambda_N\neq\emptyset}\Psi(X)\right\|\le|\Lambda_N|\left\|\sum_{X\ni0}\Psi(X)\right\|. \]
\[ \frac{1}{|\Lambda_N|}\omega_+\left(U_T^{N*}\sum_{X\cap\Lambda_N\neq\emptyset}\Psi(X)U_T^N\right)\le\frac{1}{|\Lambda_N|}\left\|\sum_{X\cap\Lambda_N\neq\emptyset}\Psi(X)\right\| \]
and this completes the proof.
\end{proof}
$\left\|\displaystyle\sum_{X\ni0}\Psi(X)\right\|$ is finite due to the equation (2). Since the relative entropy is non-negative, we obtain the following inequality giving the upper bound of the work density extracted by the cycle $\Phi_t$.
\begin{thm}[the 2nd law-type work relation in NESS]
Under the assumptions $\mathrm{(A1),(A2),(A3)}$, we have
\begin{equation}
 w(\Phi_\cdot)\le\frac{\beta_L-\beta_R}{\beta_L+\beta_R}c(\beta_L,\beta_R) .
\end{equation}
\end{thm}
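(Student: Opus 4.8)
The plan is to obtain the theorem as an immediate consequence of Lemma 3.6, which already carries all the analytic substance. Lemma 3.6 supplies the combined bound
\[ w(\Phi_\cdot)+\frac{2}{\beta_L+\beta_R}s(\Phi_\cdot)\le\frac{\beta_L-\beta_R}{\beta_L+\beta_R}c(\beta_L,\beta_R), \]
so the only remaining task is to argue that the entropy contribution on the left is non-negative and may therefore be dropped without reversing the inequality.

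First I would recall that $s(\Phi_\cdot)=\varlimsup_{N\to\infty}\frac{1}{|\Lambda_N|}S(\omega_+\circ\theta_T^{\Lambda_N}\|\omega_+)$ is assembled from Araki's relative entropy $S(\cdot\|\cdot)$, which is non-negative on pairs of states. Since $|\Lambda_N|=2N+1>0$, every term of the sequence is non-negative, and the upper limit of a sequence of non-negative reals is again non-negative; hence $s(\Phi_\cdot)\ge0$. Combined with $\beta_L,\beta_R>0$, which forces $\frac{2}{\beta_L+\beta_R}>0$, this gives $\frac{2}{\beta_L+\beta_R}s(\Phi_\cdot)\ge0$.

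Finally I would chain the two facts: from $\frac{2}{\beta_L+\beta_R}s(\Phi_\cdot)\ge0$ one has $w(\Phi_\cdot)\le w(\Phi_\cdot)+\frac{2}{\beta_L+\beta_R}s(\Phi_\cdot)$, and combining with the Lemma 3.6 bound yields $w(\Phi_\cdot)\le\frac{\beta_L-\beta_R}{\beta_L+\beta_R}c(\beta_L,\beta_R)$, as asserted. I do not expect any genuine obstacle at this step: the hard work—the $\delta_L\pm\delta_R$ decomposition and relative-entropy identity behind Lemma 3.2, the control of the boundary term in Lemma 3.5, and the passage to the thermodynamic limit in Lemma 3.6—has already been completed. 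The theorem is the clean corollary obtained by discarding the non-negative entropy cost, and this same step clarifies why the resulting bound is generically not tight: saturating it would require $s(\Phi_\cdot)=0$, which one does not expect for genuine cyclic operations on a non-equilibrium steady state.
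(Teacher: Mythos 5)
Your proposal is correct and follows exactly the paper's route: the theorem is obtained from Lemma 3.6 by discarding the term $\frac{2}{\beta_L+\beta_R}s(\Phi_\cdot)$, which is non-negative because Araki's relative entropy is non-negative (so the $\varlimsup$ of the non-negative normalized entropies is non-negative) and $\beta_L+\beta_R>0$. The paper's own justification is the one-line remark ``since the relative entropy is non-negative,'' so your argument matches it precisely, merely spelling out the details.
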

This inequality is different from that of Theorem 3.3. at the point that the upper bound is independent of the operations. So this inequality implies that we cannot extract the work density more than this bound from NESS by any cyclic operations. In the equilibrium limit $\beta_L\downarrow\beta_R$ this bound tends to 0 and the inequality reduces to the second law of thermodynamics. 

Finally we want to give a remark on the realizability of the bound in the inequality (2). It can be proved that this inequality is not tight in the sense that one can show that this bound cannot be achieved by the following theorem.      
\begin{thm}
\[ w(\Phi_\cdot)\neq0\Rightarrow s(\Phi_\cdot)>0 .\]
\end{thm}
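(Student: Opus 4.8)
The plan is to prove the contrapositive, $s(\Phi_\cdot)=0\Rightarrow w(\Phi_\cdot)=0$, in the sharp quantitative form
\[ s(\Phi_\cdot)\ge\frac{1}{2C}\,w(\Phi_\cdot)^2 \]
for a finite constant $C$; strict positivity of $s(\Phi_\cdot)$ whenever $w(\Phi_\cdot)\neq0$ then follows at once. The starting observation is that, by Lemma 3.5, the work density is nothing but the difference of the $\Phi$-energy densities of the NESS and of the transformed state $\phi_N:=\omega_+\circ\theta_T^{\Lambda_N}=\omega_+\circ\mathrm{Ad}\,U_T^N$,
\[ w(\Phi_\cdot)=\lim_{N\to\infty}\frac{1}{|\Lambda_N|}(\omega_+-\phi_N)(H_{\Lambda_N}(\Phi)). \]
So the task is to show that a sub-extensive relative entropy, $S(\phi_N\|\omega_+)=o(|\Lambda_N|)$, forces these two energy densities to coincide in the limit.

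The tool I would use is the Araki--Gibbs variational lower bound for the relative entropy (Araki's perturbation theory, cf. \cite{BR}): for every self-adjoint $P\in\mathcal{A}$,
\[ S(\phi_N\|\omega_+)\ge\phi_N(P)-\ln Z_{\omega_+}(P), \]
where $\ln Z_{\omega_+}(P)$ is the relative partition function of $P$, normalised by $Z_{\omega_+}(0)=1$ and $\frac{d}{d\tau}\ln Z_{\omega_+}(\tau P)\big|_{\tau=0}=\omega_+(P)$. Taking $P=\tau H_{\Lambda_N}(\Phi)$, dividing by $|\Lambda_N|$, and writing $p_N(\tau):=|\Lambda_N|^{-1}\ln Z_{\omega_+}(\tau H_{\Lambda_N}(\Phi))$ gives, since $p_N(0)=0$ and $p_N'(0)=|\Lambda_N|^{-1}\omega_+(H_{\Lambda_N}(\Phi))$,
\[ \frac{1}{|\Lambda_N|}S(\phi_N\|\omega_+)\ge\tau\,b_N-\Bigl(p_N(\tau)-p_N(0)-p_N'(0)\,\tau\Bigr),\qquad b_N:=\frac{(\phi_N-\omega_+)(H_{\Lambda_N}(\Phi))}{|\Lambda_N|}, \]
and one notes $b_N\to-w(\Phi_\cdot)$ by the displayed formula for $w(\Phi_\cdot)$.

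The one analytic input I would then insert is a uniform energy-variance bound: $p_N''(\tau)\le C$ for $|\tau|\le\tau_0$ and all large $N$. Because each $p_N$ is convex, this furnishes on $[-\tau_0,\tau_0]$ the quadratic majorant $p_N(\tau)-p_N'(0)\tau\le\frac{C}{2}\tau^2$ for the nonnegative remainder above, so that
\[ \frac{1}{|\Lambda_N|}S(\phi_N\|\omega_+)\ge\tau\,b_N-\frac{C}{2}\tau^2,\qquad |\tau|\le\tau_0. \]
Optimising the right-hand side at $\tau=b_N/C$ (which eventually lies in $[-\tau_0,\tau_0]$, as $b_N\to-w(\Phi_\cdot)$ is bounded) yields $|\Lambda_N|^{-1}S(\phi_N\|\omega_+)\ge b_N^2/2C$, and letting $N\to\infty$ gives $s(\Phi_\cdot)\ge w(\Phi_\cdot)^2/2C$, as desired.

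The hard part will be the uniform variance bound $\varlimsup_{N}\sup_{|\tau|\le\tau_0}p_N''(\tau)<\infty$. Since $p_N''(\tau)$ is $|\Lambda_N|^{-1}$ times the (Kubo--Mori) variance of $H_{\Lambda_N}(\Phi)$ in the state whose effective Hamiltonian is that of $\omega_+$ perturbed by $\tau H_{\Lambda_N}(\Phi)$, this is precisely a decay-of-correlations (clustering) statement for the NESS $\omega_+$ and its small perturbations; in one dimension with finite-range $\Phi$ it is the expected behaviour, and for the free Fermi gas of section 4 it should be checkable from the explicit two-point function, but in general it is exactly what must be established. A secondary point is that $S(\phi_N\|\omega_+)$ is an entropy of states on the full algebra: since $U_T^N\in\mathcal{A}_{\Lambda_N}$, $\phi_N$ and $\omega_+$ agree on $\mathcal{A}_{\Lambda_N^C}$, so the bound may be run with the localised perturbation and the boundary discrepancy is $o(|\Lambda_N|)$ by the same clustering, not affecting the density. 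If only the qualitative implication is wanted, it suffices instead to assume that $p(\tau)=\lim_N p_N(\tau)$ exists for small $\tau$ and is differentiable at $0$ (with $p'(0)=\lim_N p_N'(0)$): then $s(\Phi_\cdot)=0$ forces $p(\tau)\ge\bigl(p'(0)-w(\Phi_\cdot)\bigr)\tau$ for all small $\tau$, and comparing the one-sided derivatives at $0$ gives $w(\Phi_\cdot)=0$.
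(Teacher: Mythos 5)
Your headline quantitative route has a genuine gap at exactly the point you flag as ``the hard part'': the uniform second-derivative bound $\varlimsup_N\sup_{|\tau|\le\tau_0}p_N''(\tau)<\infty$. This is a uniform Kubo--Mori variance (clustering) estimate for $\omega_+$ and a whole neighbourhood of its perturbations, and for the NESS it is not available off the shelf: $\omega_+$ is an equilibrium state not of the finite-range $\Phi$ but of $\Phi+\lambda\Psi$, where $\Psi(\{n,n+2l\})$ decays only like $1/l$, so clustering is far from automatic and is nowhere established in the paper. Since the entire quantitative inequality $s\ge w^2/2C$ hinges on this unproven input, that version of the argument is incomplete. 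Your secondary point about the boundary is also handled differently in the paper: rather than invoking clustering, the paper first applies Araki's perturbation formula with $P=\beta W_N$ ($W_N$ the surface term of $\Phi+\lambda\Psi$ across $\partial\Lambda_N$) to replace $\omega_+$ by the decoupled local Gibbs state $\omega_\beta^N\otimes\phi$ via the Gibbs condition, and then controls $\log\|\Omega_+^{\beta W_N}\|^2$ by Peierls--Bogoliubov and Golden--Thompson together with $\|W_N\|=o(|\Lambda_N|)$. This step is what turns your abstract relative partition function $\ln Z_{\omega_+}(\tau H_N)$ into an honest finite-volume trace $\log\mathrm{Tr}\,e^{-\beta K_N+\alpha H_N}$ whose thermodynamic limit $f(\alpha)$ is known to exist.

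Your fallback qualitative argument is, in structure, exactly the paper's proof: lower-bound $S(\omega_+\circ\theta_T^{\Lambda_N}\|\omega_+)$ by $\alpha\,\omega_+\circ\theta_T^{\Lambda_N}(H_N)$ minus a pressure difference, and compare one-sided derivatives of the pressure at $\alpha=0$. But you leave the decisive analytic input --- existence of $p(\tau)=\lim_Np_N(\tau)$ and its differentiability at $0$ with $p'(0)=e$ --- as an assumption (``it suffices instead to assume''). The paper supplies precisely this: since $\omega_+$ is the unique KMS state for the translation-invariant interaction $\Phi+\lambda\Psi$, the equivalence of the KMS condition with the variational principle (Araki--Moriya) forces the pressure $f(\alpha)=\lim_N|\Lambda_N|^{-1}\log\mathrm{Tr}\,e^{-\beta K_N+\alpha H_N}$ to be differentiable at $\alpha=0$ with $f'(0)=e$; this is the theorem's real content, not a technicality. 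So: right strategy, and the qualitative variant is essentially the published argument, but as written the proposal proves the implication only conditionally --- either on an unestablished clustering bound or on an assumed differentiability of the pressure.
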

\begin{proof}
Most of the tools necessary to the proof are in \cite{AM}. Denote $H_{\Lambda_N}(\Phi)$ and $H_{\Lambda_N}(\Phi+\lambda\Psi)$ by $H_N$ and $K_N$, where $\lambda=\frac{\beta_L-\beta_R}{\beta_L+\beta_R}$. Set $W_N=\displaystyle\sum_{X\cap\Lambda_N\neq\emptyset,X\cap\Lambda_N^C\neq\emptyset}(\Phi(X)+\lambda\Psi(X))$. The key of the proof is the following equality on the relative entropy\cite{BR2}: Suppose that the cyclic vector $\Omega_2$ associated to $\omega_2$ is separating for $\pi_2(\mathcal{A})''$ and $\omega_1(A)=\langle\Omega_1,\pi_2(A)\Omega_1\rangle$ for cyclic and separating vector  $\Omega_1$ in $\mathcal{H}_2$, then for $P=P^{*}\in\mathcal{A}$
\[ S(\omega_1\|\omega_2)=S(\omega_1\|\omega_2^P)+\omega_1(P)-\log\|\Omega_2^P\|^2,\]
holds, where  $\omega_2^P(A)=\frac{\langle\Omega_2^P,\pi(A)\Omega_2^P\rangle}{\|\Omega_2^P\|^2}$ and $\Omega_2^P=e^{\frac{\log\Delta+\pi_2(P)}{2}}\Omega_2$ ($(\mathcal{H}_2,\pi_2,\Omega_2)$ is the GNS triple associated to $\omega_2$ and $\Delta$ is the modular operator induced by the cyclic and separating vector $\Omega_2$).

First consider the case $P=\beta W_N, \omega_1=\omega_+\circ\theta_T^{\Lambda_N}$ and $\omega_2=\omega_+$, where $\beta=\frac{\beta_L+\beta_R}{2}$. Then $\omega_+^{\beta W_N}=\omega_\beta^N\otimes\phi$ (Gibbs condition), where $\omega_+^N(A)=\frac{\mathrm{Tr}e^{-\beta K_N}A}{\mathrm{Tr}e^{-\beta K_N}}, A\in\mathcal{A}_{\Lambda_N}$, and we obtain
\[ S(\omega_+\circ\theta_T^{\Lambda_N}\|\omega_+)=S(\omega_+\circ\theta_T^{\Lambda_N}\|\omega_\beta^N\otimes\phi)+\beta\omega_+\circ\theta_T^{\Lambda_N}(W_N)-\log\|\Omega_+^{\beta W_N}\|^2 .\]
$\Omega_+$ is the cyclic vector corresponding to $\omega_+$. Next consider the case $P=\alpha H_N,\omega_1=\omega_+\circ\theta_T^{\Lambda_N}$ and $\omega_2=\omega_\beta^N\otimes\phi$ ($\alpha\in\mathbb{R}$ is arbitrary).
By the positivity of the relative entropy, we have 
\[ S(\omega_+\circ\theta_T^{\Lambda_N}\|\omega_+)\ge\alpha\omega_+\circ\theta_T^{\Lambda_N}(H_N)+\beta\omega_+\circ\theta_T^{\Lambda_N}(W_N)-\log\frac{\mathrm{Tr}e^{-\beta K_N+\alpha H_N}}{\mathrm{Tr}e^{-\beta K_N}}-\log\|\Omega_+^{\beta W_N}\|^2 .\]
Recall that $\frac{1}{|\Lambda_N|}\|W_N\|\to0\ (N\to\infty)$\cite{AM}. By the Peierls-Bogoliubov inequality and the Golden-Thompson inequality 
\[ e^{\beta\omega_+(W_N)}\le\|\Omega_+^{\beta W_N}\|^2\le\|e^{\frac{\beta\pi_+(W_N)}{2}}\Omega_+\|^2\le e^{\beta\|W_N\|} \]
and 
\[ \lim_{N\to\infty}\frac{1}{|\Lambda_N|}\log\|\Omega_+^{\beta W_N}\|^2=0. \]
Set
\[ e_T=\lim_{N\to\infty}\frac{1}{|\Lambda_N|}\omega_+\circ\theta_T^{\Lambda_N}(H_N)\] 
\[ e=\lim_{N\to\infty}\frac{1}{|\Lambda_N|}\omega_+(H_N) .\]
Then we have
\[ s(\Phi_\cdot)\ge\alpha e_T-f(\alpha)+f(0) \]
where $f(\alpha)=\displaystyle\lim_{N\to\infty}\frac{1}{|\Lambda_N|}\log\mathrm{Tr}e^{-\beta K_N+\alpha H_N}$ (this limit exists\cite{AM}). Since $\omega_+$ is a unique KMS state, by the equivalence of the KMS condition and the variational principle,  $f(\alpha)$ is differentiable at $\alpha=0$ and $\frac{d}{d\alpha}f(\alpha)|_{\alpha=0}=e$.  
Here we consider the case $w(\Phi_\cdot)>0$ (The case $w(\Phi_\cdot)<0$ is similarly proved). Fix $\epsilon$ such that $0<\epsilon<w(\Phi_\cdot)$. We can choose $\alpha<0$ such that 
\[ \left|\frac{f(\alpha)-f(0)}{\alpha}-e\right|<\epsilon .\]
Then 
\begin{eqnarray*}
s(\Phi_\cdot)&\ge&\alpha e_T-\alpha(e-\epsilon) \\
&=&\alpha(\epsilon-w(\Phi_\cdot)) \\
&>&0.
\end{eqnarray*}
\end{proof}

\section{Example: free lattice fermion}
In this section we discuss the free Fermi gas on $\mathbb{Z}$. This system satisfies all the assumptions (A1)-(A3) in the previous section. NESS of free fermion systems is already well studied\cite{AJPP,AJPP2}. Let us start with the general free Fermi gas.

Let $\mathcal{A}_\mathcal{S}=\mathcal{A}^{CAR}(\mathcal{H}_\mathcal{S})$,\ $\mathcal{A}_\mathcal{R}=\mathcal{A}^{CAR}(\mathcal{H}_L\oplus\mathcal{H}_R)$, where $\mathcal{H}_\mathcal{S}$ is a finite dimensional Hilbert space. In this case the C*-algebra of  the whole system $\mathcal{A}=\mathcal{A}_\mathcal{S}\otimes\mathcal{A}_\mathcal{R}=\mathcal{A}^{CAR}(\mathcal{H}_\mathcal{S})\otimes\mathcal{A}^{CAR}(\mathcal{H}_L\oplus\mathcal{H}_R)$ is isomorphic to $\mathcal{A}^{CAR}(\mathcal{H}_L\oplus\mathcal{H}_\mathcal{S}\oplus\mathcal{H}_R)$ \cite{AJPP}. So we identify $\mathcal{A}$ with $\mathcal{A}^{CAR}(\mathcal{H}_L\oplus\mathcal{H}_\mathcal{S}\oplus\mathcal{H}_R)$. Let $h_\mathcal{S},h_L,h_R$ be self-adjoint operators on $\mathcal{H}_\mathcal{S},\mathcal{H}_L,\mathcal{H}_R$ (1-particle Hamiltonian) and define
\[ g=\beta_Lh_L\oplus0\oplus\beta_Rh_R \]
\[ h_0=h_L\oplus h_\mathcal{S}\oplus h_R .\]
on $\mathcal{H}_L\oplus\mathcal{H}_\mathcal{S}\oplus\mathcal{H}_R$. The initial state $\omega_\mathcal{S}\otimes\omega_\mathcal{R}$ on $\mathcal{A}_\mathcal{S}\otimes\mathcal{A}_\mathcal{R}$ corresponds to the quasi-free state on $\mathcal{A}=\mathcal{A}^{CAR}(\mathcal{H}_L\oplus\mathcal{H}_\mathcal{S}\oplus\mathcal{H}_R)$ generated by the self-adjoint operator $\frac{1}{1+e^g}$ on $\mathcal{H}=\mathcal{H}_L\oplus\mathcal{H}_\mathcal{S}\oplus\mathcal{H}_R$. $\omega$ on $\mathcal{A}^{CAR}(\mathcal{H})$ is said to be a quasi-free state if there is a self-adjoint operator $T$ on $\mathcal{H}$ such that $0\le T\le I$ and 
\[ \omega(a^{*}(f_n)\cdots a^{*}(f_1)a(g_1)\cdots a(g_m))=\delta_{nm}\mathrm{det}((\langle g_j,Tf_i\rangle)_{ij}), \]
$n,m\in\mathbb{N},\ f_i,g_j\in\mathcal{H}$.
Denote $\alpha_t$ the dynamics induced by $h_0$. For a trace class self-adjoint operator $v=v_L+v_R$ on $\mathcal{H}$, $V=\mathrm{d}\Gamma(v)\in\mathcal{A}$ and the dynamics induced by $h=h_0+v$ and the perturbed dynamics $\alpha^V_t$ coincide. So
\[ \alpha_{-t}\circ\alpha^V_t(a^{\#}(f))=a^{\#}(e^{-ith_0}e^{ith}f). \]
From this relation if the limit s-$\displaystyle\lim_{t\to\infty}e^{-ith_0}e^{ith}(=W)$ exists, then the limit s-$\displaystyle\lim_{t\to\infty}\alpha_{-t}\circ\alpha^V_t(=\gamma)$ also exists. $W$ is called a wave operator and plays an important role in scattering theory. By Kato-Rosenblum\cite{Ka,RO}, if $h$ has only absolutely continuous spectrum, then this limit exists and $\mathrm{Ran}W=\mathcal{H}_{ac}(h_0)$, $\mathrm{Ran}\gamma=\mathcal{A}^{CAR}(\mathcal{H}_{ac}(h_0))\subset\mathcal{A}$. For a self-adjoint operator $A$ on $\mathcal{H}$, $\mathcal{H}_{ac}(A)$ is the set of $\phi\in\mathcal{H}$ such that $\langle\phi,P_A(\cdot)\phi\rangle$ is absolutely continuous with respect to the Lebesgue measure, where $P_A(\cdot)$ is the spectral measure of $A$.  When $h,h_L,h_R$ have only absolutely continuous spectrum, (A1), (A2) are satisfied.

Now consider a concrete model, free Fermi gas on one-dimensional lattice $\mathbb{Z}$. Let $M\in\mathbb{N}$ and assume that Hilbert spaces are given by
\[ \mathcal{H}_L=l^2((-\infty,-M-1]\cap\mathbb{Z}) \]
\[ \mathcal{H}_\mathcal{S}=l^2([-M,M]\cap\mathbb{Z}) \]
\[ \mathcal{H}_R=l^2([M+1,\infty)\cap\mathbb{Z}) ,\]
so $\mathcal{H}_L\oplus\mathcal{H}_\mathcal{S}\oplus\mathcal{H}_R=l^2(\mathbb{Z})$. 1-particle Hamiltonian including the interaction $v$ is 
\[ (h\psi)(n)=-\frac{1}{2}(\psi(n-1)+\psi(n+1)) .\]
This is the discrete version of $-\frac{1}{2}\Delta$ on $L^2(\mathbb{R})$ ($\Delta$ is Laplacian).

The interaction between the system and the reservoirs are
\[ v_L=-\frac{1}{2}(\langle e_{-M},\cdot\rangle e_{-M-1}+\langle e_{-M-1},\cdot\rangle e_{-M}) \]
\[ v_R=-\frac{1}{2}(\langle e_{M},\cdot\rangle e_{M+1}+\langle e_{M+1},\cdot\rangle e_{M}) ,\]
where $\{e_n\}_{n\in\mathbb{Z}}$ is the standard basis of $l^2(\mathbb{Z})$. The total Hamiltonian is formally given by 
\[ -\frac{1}{2}\sum^{\infty}_{n=-\infty}(a^{*}_na_{n+1}+a^{*}_{n+1}a_n) \]
and
\[ V=-\frac{1}{2}(a^{*}_{-M}a_{-M-1}+a^{*}_{-M-1}a_{-M}+a^{*}_Ma_{M+1}+a^{*}_{M+1}a_M)\in\mathcal{A}. \]
On the momentum space $L^2(-\pi,\pi)$, $h$ is the multiplication operator $\hat{h}$,
\[ (\hat{h}\hat{\psi})(k)=(-\cos k)\hat{\psi}(k). \]
This operator has only absolutely continuous spectrum. By Kato-Rosenblum theorem s-$\displaystyle\lim_{t\to\infty}\alpha_{-t}\circ\alpha_t^V$ exists. Since $\mathcal{H}_{ac}(h_0)=\mathcal{H}_L\oplus\mathcal{H}_R$, $\mathrm{Ran}\gamma=I\otimes\mathcal{A}_\mathcal{R}$ holds. Thus, this system satisfies (A1), (A2) in the previous section. Furthermore $\omega_\mathcal{R}$ is mixing for $(\mathcal{A}_\mathcal{R},\alpha^\mathcal{R})$ and faithful. So (A3) is also satisfied and we can apply Theorem 3.7. It is known that NESS $\omega_+$ is the quasi-free state generated by the multiplication operator on $L^2(-\pi,\pi)$ of the function
\[ 
\rho(k)=
\begin{cases}
\frac{1}{1+e^{-\beta_R\cos k}} & (-\pi< k <0)\\
\frac{1}{1+e^{-\beta_L\cos k}} & (0\le k\le\pi) .
\end{cases}
\]
Furthermore this NESS is the KMS state with the dynamics corresponding to the following interaction \cite{MO},
\[ \Phi(X)=
\begin{cases}
-\frac{1}{2}(a^{*}_na_{n+1}+a^{*}_{n+1}a_n) & (X=\{n,n+1\}) \\
-\frac{2i}{\pi}\frac{\beta_L-\beta_R}{\beta_L+\beta_R}\frac{2l}{(2l)^2-1}(a^{*}_na_{n+2l}-a^{*}_{n+2l}a_n) & (X=\{n,n+2l\},l\in\mathbb{Z})\\
0 & (\mathrm{otherwise}).
\end{cases}
\]
Thus in this model the interaction $\Psi$ in the Lemma 3.6. is 
\[ \Psi(X)=
\begin{cases}
-\frac{2i}{\pi}\frac{2l}{(2l)^2-1}(a^{*}_na_{n+2l}-a^{*}_{n+2l}a_n) & (X=\{n,n+2l\})\\
0 & (\mathrm{otherwise})
\end{cases}
\]
and we can estimate 
$\left\|\displaystyle\sum_{X\ni0}\Psi(X)\right\|$ as
\[ \left\|\sum_{X\ni0}\Psi(X)\right\|\le2\sqrt{1-\frac{4}{\pi^2}} .\]

\end{document}